\DeclareMathOperator{\proj}{proj}
\theoremstyle{plain}
	\newtheorem{thm}{Theorem}
		\numberwithin{thm}{section}
	\newtheorem{lemma}[thm]{Lemma}
	\newtheorem{prop}[thm]{Proposition}
	\newtheorem{conj}[thm]{Conjecture}
	\newtheorem*{thm*}{Theorem}
	\newtheorem*{lemma*}{Lemma}
	\newtheorem*{prop*}{Proposition}
	\newtheorem*{cor*}{Corollary}
	\newtheorem*{conj*}{Conjecture}
\theoremstyle{definition}
	\newtheorem*{example*}{Example}
	\newtheorem{remark}[thm]{Remark}
\begin{document}

\title[Intersections of sparse automatic sets]{Quantitative estimates for the size of an intersection of sparse automatic sets}

\author[S. Albayrak]{Seda Albayrak}
\address{Department of Mathematics and Statistics, University of Calgary, Calgary, AB Canada T2N 1N4}
\email{gulizar.albayrak@ucalgary.ca}
\thanks{The first-named author's postdoctoral appointment at the University of Calgary was partially supported by NSERC grant RGPIN-2018-03770 and CRC tier-2 research stipend 950-231716.} 
\author[J. P. Bell]{Jason P. Bell}
\address{Department of Pure Mathematics, University of Waterloo, Waterloo, ON Canada N2L 3G1}
\email{jpbell@uwaterloo.ca}
\thanks{The second-named author was supported by NSERC grant RGPIN-2022-02951.}
\keywords{Automatic sets, Cobham's theorem, sparse sets, independent bases}
\subjclass[2020]{Primary 68Q45; Secondary 11B85}

\begin{abstract} A theorem of Cobham says that if $k$ and $\ell$ are two multiplicatively independent natural numbers then a subset of the natural numbers that is both $k$- and $\ell$-automatic is eventually periodic.  A multidimensional extension was later given by Semenov.  In this paper, we give a quantitative version of the Cobham-Semenov theorem for sparse automatic sets, showing that the intersection of a sparse $k$-automatic subset of $\mathbb{N}^d$ and a sparse $\ell$-automatic subset of $\mathbb{N}^d$ is finite with size that can be explicitly bounded in terms of data from the automata that accept these sets.
\end{abstract}

\maketitle

\section{Introduction}

Let $k$ be a natural number that is greater than or equal to $2$.  A subset $S$ of $\mathbb{N}$ is $k$-\emph{automatic} 
if there is a deterministic finite automaton with input alphabet $
\Sigma_k=\{0,1,\ldots ,k-1\}$ with the property that the words 
over the alphabet $\Sigma_k$ which are accepted by the 
automaton (where we read words right-to-left) are precisely the words that are base-$k$ expansions of elements of $S$.  One can naturally extend this notion of automaticity to subsets of $\mathbb{N}^d$ with $d\ge 1$, by now working with the input alphabet $(\Sigma_k)^d$.  Then, given a $d$-tuple $(n_1,\ldots ,n_d)$ of natural numbers---after possibly padding some words with $0$ at the beginning---we see there exist words $w_1,\ldots ,w_d$ of the same length with the additional property that $w_i$ is a base-$k$ expansion of $n_i$ for $i=1,\ldots ,d$ (base-$k$ expansions are unique up to some number of leading zeros) and where at least one $w_i$ has no leading zeros.  Then a subset of $\mathbb{N}^d$ is $k$-automatic if there is a finite-state machine with input alphabet $(\Sigma_k)^d$ that accepts precisely the words $(w_1,\ldots ,w_d)$ corresponding to $d$-tuples of natural numbers in $S$.  We refer the reader to the book of Allouche and Shallit \cite{AS} for further background on automata and automatic sets, and we assume that the reader has some familiarity with deterministic finite-state automata. 

As an example, observe that the deterministic finite-state automaton in Figure 1 with input alphabet $\Sigma_2 = \{0,1\}$ accepts the set of words corresponding to binary expansions of elements of the $2$-automatic set $\{3\cdot 2^n+1\colon n\ge 1\}$, where we adopt the usual convention of using doubly circled states to denote accepting states of a finite-state automaton.

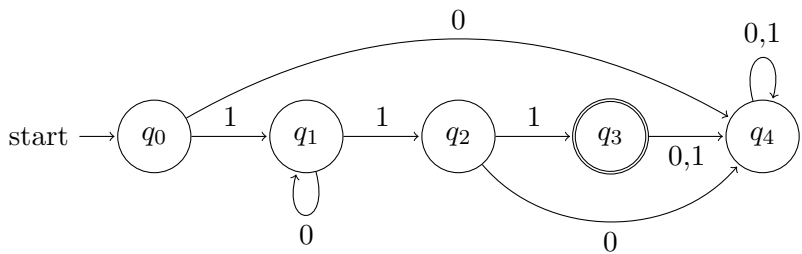
\begin{figure}[!htbp]
\begin{tikzpicture}[shorten >=1pt,node distance=2cm,on grid]
  \node[state,initial]   (q_0)                {$q_0$};
  \node[state]           (q_1) [right=of q_0] {$q_1$};
   \node[state]           (q_2) [right=of q_1] {$q_2$};
   \node[state, accepting]     (q_3) [right=of q_2] {$q_3$};
   \node[state]     (q_4) [right=of q_3] {$q_4$};
  \path[->] (q_1)		(q_1) edge [loop below]   node         {0} ()
   (q_0)		 	edge     	node [above] {1} (q_1)
   (q_0)		 	edge  [bend left]   	node [above] {0} (q_4)
		(q_1)		edge node [above] {1} (q_2)
		(q_2)		edge [bend right =50] node [below] {0} (q_4)
		(q_2)		edge node [above] {1} (q_3)
		(q_3)		edge  node [below] {0,1} (q_4)
		(q_4)		(q_4) edge [loop above]   node         {0,1} ();
  \end{tikzpicture}
  \caption{The finite-state machine generating the set $\{3\cdot 2^n+1\colon n\ge 1\}$.}
\end{figure}

A celebrated result of Cobham \cite{Cobham} shows that if $k$ and $\ell$ are two multiplicatively independent natural numbers greater than one (i.e., there are no solutions to the equation $k^a=\ell^b$ with nonzero integers $a$ and $b$) and $S \subseteq \mathbb{N}$ is a set that is both $k$- and $\ell$-automatic then $S$ is in fact eventually periodic; i.e., there is some fixed positive integer $c$ such that for sufficiently large $n \in \mathbb{N}$,  $n\in S$ implies $n+c\in S$.  A multidimensional version of Cobham's theorem was later given by Semenov \cite{Sem} (see also \cite{MV96}), who showed that a subset of $\mathbb{N}^d$ that is both $k$- and $\ell$-automatic, with $k$ and $\ell$ multiplicatively independent, is a semilinear set (equivalently, a set that is definable in Presburger arithmetic or a set that is automatic with respect to all positive integer bases).

In recent years there have been new proofs and extensions of Cobham's theorem to other settings \cite{adambell2, adambell, bellreg, Durandcobham, durandsub, HierSch, Krebs, SchSin} (see also the survey chapter by Durand and Rigo \cite{durandrigo}).  One particularly interesting extension is recent work of Hieronymi and Schulz \cite{HierSch}, which shows that if one takes Presburger arithmetic and adds a $k$-automatic predicate $X$ and an $\ell$-automatic predicate $Y$, with $k$ and $\ell$ multiplicatively independent, then the resulting structure has an undecidable first-order theory unless one of the two sets is already Presburger definable.  Taking $X$ equal to $Y$, one immediately deduces Cobham's theorem.  In light of this work, it is a natural question to look at the intersection of a $k$-automatic set and an $\ell$-automatic set and to ask to what extent the intersection can be described.   

In general, this question is intractable and many Diophantine questions that lie beyond the scope of currently available methods in number theory can be encoded within this framework.  For example, Erd\H{o}s \cite[p. 67]{Erdos} famously conjectured that the set of powers of two (which is $2$-automatic) and the $3$-automatic set consisting of numbers whose ternary expansions omit $2$ has finite intersection, saying ``\emph{as far as I can see, there is no method at our disposal to attack this conjecture.}” 

Within the theory of automatic sets, however, there is a well-known dichotomy: if $S$ is an automatic subset of the natural numbers, then either there is some natural number $d$ such that $S$ has ${O}((\log n)^d)$ elements of size at most $n$ or there is a positive number $\alpha$ such that $S$ has at least $n^{\alpha}$ elements of size at most $n$ for all sufficiently large $n$ (see, for example, \cite[\S2.3]{Gawrychowski&Krieger&Rampersad&Shallit:2010} or \cite[Proposition 7.1]{BM}).  An automatic set for which the polylogarithmic bound holds is called \emph{sparse}, and this notion again naturally extends to the multidimensional setting.  Sparse automatic sets have arisen naturally in many unrelated contexts \cite{AB, BGM, Derksen, Kedlaya1, Kedlaya2, MoosaScanlon} and form an important subclass of the more general collection of automatic sets.

As an example, the set constructed in Figure 1 is a sparse $2$-automatic set, as there are $O(n)$ elements of size less than $2^n$.  
We refer the reader to \S\ref{sec:sparse} for more background on sparse sets. 

In this paper, we restrict our focus to the problem of giving a description of the intersection of two sparse automatic subsets of $\mathbb{N}^d$ that are automatic with respect to two multiplicatively independent bases.  This setting---while more restrictive than the general setting in which one studies the possible forms an intersection of two automatic sets can take---still captures many interesting number theoretic questions.  Notably, Catalan's conjecture (now a theorem of Mihăilescu \cite{Mih}) asserts that the intersection of the sparse $2$-automatic set $\{2^n+1\colon n\ge 0\}$ with the sparse $3$-automatic set $\{3^m\colon m\in \mathbb{N}\}$ consists only of the numbers $3$ and $9$.  We give the following general finiteness result.
\begin{thm} \label{thm:main1}
Let $k$ and $\ell$ be multiplicatively independent natural numbers greater than or equal to $2$ and let $d$ be a positive integer. If $X$ is a sparse $k$-automatic subset of $\mathbb{N}^d$ and $Y$ is a sparse $\ell$-automatic set of $\mathbb{N}^d$, then $X\cap Y$ is finite and there is an effectively computable upper bound for the size of the intersection in terms of $d, k, \ell$ and data from the minimal automata that accept these sets.
\end{thm}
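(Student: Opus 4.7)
The plan is to combine the structure theory of sparse automatic sets with the quantitative theory of $S$-unit equations. The first step is to use the description of sparse $k$-automatic subsets of $\mathbb{N}^d$---ultimately a reformulation of the characterization of sparse regular languages as finite unions of languages of the form $u_0 v_1^* u_1 \cdots v_m^* u_m$---as finite unions of parameterized families of tuples $(f_1(\vec n), \ldots, f_d(\vec n))$ with $\vec n \in \mathbb{N}^r$, where each coordinate function is an exponential polynomial in base $k$:
\[
f_i(\vec n) \;=\; \sum_{j=1}^{s_i} \gamma_{i,j}\,k^{L_{i,j}(\vec n)},
\]
with each $L_{i,j}\colon \mathbb{N}^r \to \mathbb{N}$ a nonnegative integer affine form and each $\gamma_{i,j}$ a positive integer. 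All of this data can be extracted effectively from the minimal automaton accepting $X$. The same applies to $Y$ with base $\ell$ in place of $k$.

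An element of $X \cap Y$ then corresponds to a choice of a family from $X$, a family from $Y$, and parameter vectors $\vec n \in \mathbb{N}^r,\,\vec m \in \mathbb{N}^{r'}$ satisfying the $d$ simultaneous coordinate equations
\[
\sum_{j=1}^{s_i} \gamma_{i,j}\,k^{L_{i,j}(\vec n)} \;=\; \sum_{h=1}^{t_i} \delta_{i,h}\,\ell^{M_{i,h}(\vec m)}, \qquad i=1,\ldots,d.
\]
Since distinct intersection points give distinct $d$-tuples, and there are only finitely many pairs of families to consider, it suffices to bound, for each such pair, the number of $(\vec n, \vec m)$ satisfying even a \emph{single} coordinate equation (arranging, where necessary, that the linear forms are generic enough that the parameters are then determined).

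The heart of the argument is the bound for one coordinate equation, which I would view as a linear relation
\[
\sum_j \gamma_{i,j}\,u_j \;-\; \sum_h \delta_{i,h}\,v_h \;=\; 0
\]
in the $S$-units $u_j = k^{L_{i,j}(\vec n)}$ and $v_h = \ell^{M_{i,h}(\vec m)}$, with $S$ the archimedean place together with the rational primes dividing $k\ell$. The Evertse--Schlickewei--Schmidt theorem bounds the number of non-degenerate projective solutions by an explicit function of the number of terms and of $|S|$. A projective solution determines each $u_j, v_h$ up to a common $S$-unit scalar; this scalar is then pinned down up to boundedly many choices by the positivity and integrality of the original terms together with unique factorization in $\mathbb{Z}$. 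Degenerate solutions---those admitting a vanishing subsum---are handled by induction on the total number of terms: since every coefficient $\gamma,\delta$ is positive, a vanishing subsum must mix $k$-terms and $\ell$-terms, and a minimal mixed relation $\gamma k^a = \delta \ell^b$ has at most one solution by multiplicative independence of $k$ and $\ell$.

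I expect the principal obstacle to be twofold. First, one must arrange the structure-theorem decomposition so that the affine forms $L_{i,j}$ (and $M_{i,h}$) in each family can be taken generic enough that the $S$-unit values uniquely determine $\vec n$ and $\vec m$; families where this fails can be refined or absorbed by further case analysis, but the bookkeeping is subtle. Second, one must track the inductive bound for degenerate solutions in a way that yields an effective constant expressed transparently in terms of the automata data (number of states, $k$, $\ell$, and $d$). Once these issues are controlled, summing the per-family bounds over the finitely many pairs of parameterizations produces the explicit upper bound on $|X \cap Y|$ asserted in the theorem.
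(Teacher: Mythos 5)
Your proposal is broadly in the spirit of the paper's proof (decompose sparse sets into simple ``exponential'' families via the $v_0 w_1^* v_1 \cdots w_s^* v_s$ structure theorem, translate each family into a linear relation among powers of $k$ and $\ell$, and then apply a quantitative $S$-unit theorem), but there is a genuine gap in how you handle the $d$-dimensional case.

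You propose to bound, for each pair of families, ``the number of $(\vec n, \vec m)$ satisfying even a single coordinate equation,'' relying on the affine forms being ``generic enough that the parameters are then determined.'' This genericity cannot in general be arranged. The exponent forms $L_{i,j}$ arising from a single coordinate of a $d$-dimensional sparse family simply may not involve all of the loop-parameters $\vec n$: if, say, the $i$-th coordinate of every word in $w_2$ is $0$, then $n_2$ does not appear in the $i$-th coordinate's exponential expansion at all, and no refinement of the decomposition can change that. In such a case the set of $(\vec n, \vec m)$ satisfying the $i$-th coordinate equation is typically infinite (those free parameters range over $\mathbb{N}$), so your per-coordinate count is not finite and the argument does not close. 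The paper avoids this entirely: it bounds, for each coordinate $i$, the set of \emph{values} in $\mathbb{N}$ lying in the projection intersection $W_{p,i}\cap Z_{q,i}$ (not the set of parametrizations), and then observes $W_p\cap Z_q \subseteq (W_{p,1}\cap Z_{q,1})\times\cdots\times(W_{p,d}\cap Z_{q,d})$. Bounding values rather than parameters, and multiplying the $d$ coordinate bounds, is the correct fix.

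A second, smaller issue: you assume the exponential-family coefficients $\gamma_{i,j}$ are positive integers, and use positivity to argue that a vanishing subsum must mix $k$-terms and $\ell$-terms. The coefficients that actually arise (Proposition \ref{rem:sparse}) are only rational and may be negative (e.g., $[1^n]_2 = 2^n - 1$ contributes a $-1$), so the positivity argument is not available. The paper instead notes that only the common \emph{value} of the two sides matters, so a maximal vanishing subsum can be removed up front at the cost of a bounded combinatorial factor ($2^{s+1}\cdot 2^{t+1}$ choices of subsums plus a set-partition count), and then applies the $S$-unit bound to the remaining non-degenerate relation. Your induction on the number of terms can probably be made to work, but only after you replace the positivity argument with something that handles arbitrary rational coefficients; as written it would fail on the first example with a negative coefficient.
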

As in the work of Hieronymi and Schulz \cite{HierSch}, if we take $X=Y$, we see that a subset of $\mathbb{N}^d$ that is a sparse automatic set with respect to two multiplicatively independent bases is necessarily finite, and so in this sense one can view our main result as a quantitative extension of the sparse case of the Cobham-Semenov theorem.

For Theorem \ref{thm:main1}, we in fact give a closed form for upper bounds in terms of just $d$, $k$, $\ell$, and the number of states in the minimal automata accepting $X$ and $Y$ (see Theorem \ref{thm:main2} for explicit bounds; we note that our bounds are not optimized but are rather expressed in a clean form).  One might ask whether one can decide whether the intersection is empty or even whether one can effectively determine the intersection.  Both of these problems are apparently very difficult and connected to highly non-trivial Diophantine questions that are not known to be decidable at this time.

The outline of this paper is as follows. In \S\ref{sec:sparse}, we give a brief overview of sparse languages and sparse sets. In \S\ref{sunit}, we give a brief overview of $S$-unit theory and state the key result we will be using in proving our main theorem. In \S\ref{cobham} we prove a precise version of Theorem \ref{thm:main1}.  Finally, we pose a general conjecture about the form of intersections of sparse $k$-automatic sets with zero-density $\ell$-automatic sets in \S\ref{conjecture}.

\subsection{Notation}
Throughout this paper, given an alphabet $\Sigma$, we let $\Sigma^*$ denote the free monoid consisting of all finite words over the alphabet $\Sigma$.  When $\Sigma=\{u\}$ is a singleton, we write $u^*$ rather than $\{u\}^*$ for $\Sigma^*$.  For an integer $k\ge 2$, we take $\Sigma_k=\{0,\ldots ,k-1\}$, and we let $$[\, \cdot \,]_k: \left(\Sigma_k^d\right)^* \to \mathbb{N}^d$$ denote the map that takes a $d$-tuple of words (here the value of $d$ depends on the context and for much of the paper we take $d=1$) and outputs the $d$-tuple of natural numbers formed by taking the base-$k$ expansions of these words.  So, for example, $[(2110,0020)]_3 = (66,6)$.  In general, we assume that at least one of the words in our $d$-tuple has no leading zeros so that $d$-tuples of natural numbers have unique base-$k$ expansions.   

We also make use of deterministic finite automata (DFAs) throughout this paper. We represent a DFA using a $5$-tuple $(Q,\Sigma,\delta,q_0,F)$, where $Q$ is a non-empty finite set of states, $\Sigma$ is a finite input alphabet, $\delta: Q\times \Sigma \to Q$ is the transition function, $q_0\in Q$ is the initial state, and $F\subseteq Q$ is the set of accepting states.  We note that $\delta$ can be extended inductively as a map from $Q\times \Sigma^*\to Q$ by declaring that $\delta(q,xw) = \delta(\delta(q,x),w)$ for all $w\in \Sigma^*$ and $x\in \Sigma$.
  
\section{Sparse automatic sets}
\label{sec:sparse}
In this section we give a brief summary of sparse languages and sparse sets. Sparse automatic sets and related concepts have been studied by many authors (see, for example, \cite{Gawrychowski&Krieger&Rampersad&Shallit:2010} and references therein).

Given a finite alphabet $\Sigma$ and a language $\mathcal{L}\subseteq \Sigma^*$ over $\Sigma$, we have an associated counting function $$f_{\mathcal{L}}(n):=  \left|\{ w \in \mathcal{L} \colon {\rm length}(w) \le n \}\right|.$$ A regular language $\mathcal{L}$ is \emph{sparse} if $f_{\mathcal{L}}(n) = O(n^d)$ for some natural number $d$.

There is a precise characterization of sparse regular languages, which has been obtained by several authors and is recorded in \cite[Proposition 7.1]{BM}. 
\begin{prop} \label{sparse}
Let $\mathcal{L}$ be a regular language. The following are equivalent: \begin{enumerate}
\item $\mathcal{L}$ is sparse. 
\item $\mathcal{L}$ is a finite union of languages of the form $v_1 w_1^* v_2 w_2^* \dots v_s w_s^* v_{s+1}$, where $s \geq 0$, the $v_i$ are possibly trivial words, and the $w_i$ are non-trivial words over the alphabet $\{0,1,\ldots ,k-1\}$. \label{rem:simple}
\item If $\Gamma = (Q,\Sigma,\delta,q_0,F)$ is a minimal finite automaton accepting $\mathcal{L}$. 
Then $\Gamma$ satisfies the following.
\begin{itemize}
\item[($*$)] If $q$ is a state such that $\delta(q,v)\in F$ for some word $v$ then there is at most one non-trivial word $w$ with the property that $\delta(q,w)=q$ and $\delta(q,w')\neq q$ for every non-trivial proper prefix $w'$ of $w$.
\end{itemize}
\end{enumerate}
\end{prop}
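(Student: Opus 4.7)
The plan is to prove the three equivalences by establishing the cycle $(2) \Rightarrow (1) \Rightarrow (3) \Rightarrow (2)$.

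For $(2) \Rightarrow (1)$, I would just count. A word of the form $v_1 w_1^{a_1} v_2 w_2^{a_2} \cdots v_s w_s^{a_s} v_{s+1}$ with length at most $n$ corresponds to a tuple $(a_1,\ldots,a_s)\in \mathbb{N}^s$ satisfying $\sum_i a_i|w_i| \le n - \sum_i |v_i|$, and the number of such tuples is $O(n^s)$. A finite union of such languages is therefore $O(n^d)$ with $d$ equal to the maximum $s$ appearing.

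For $(1) \Rightarrow (3)$, I would argue the contrapositive. Suppose condition $(*)$ fails at a state $q$, so there exist two distinct simple (non-trivial) cycles $w\neq w'$ at $q$ and a word $v$ with $\delta(q,v)\in F$; let $u$ be any word with $\delta(q_0,u)=q$. The key lemma is that $w$ and $w'$ do not commute: if $ww'=w'w$ then by the classical Fine--Wilf type result both would be powers of a common word $z$, and running $z$ repeatedly from $q$ one obtains a shortest return time $c$ dividing both $|w|/|z|$ and $|w'|/|z|$, so that $z^c$ is itself a simple cycle at $q$; simplicity of $w$ and $w'$ then forces $w=z^c=w'$, a contradiction. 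Since $\{w,w'\}$ consists of two non-commuting words, the classical theorem that two words commute iff $\{u,v\}$ is not a code gives that $\{w,w'\}$ is a code, i.e., generates a free submonoid. Thus the $2^m$ concatenations $u\, c_1 c_2\cdots c_m\, v$ with $c_i\in\{w,w'\}$ are pairwise distinct and all lie in $\mathcal{L}$, producing at least $2^m$ accepted words of length at most $|u|+|v|+m\max(|w|,|w'|)$. This exponential lower bound contradicts $f_{\mathcal{L}}(n)=O(n^d)$.

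For $(3) \Rightarrow (2)$, I would analyse successful runs of the minimal DFA $\Gamma$ directly. Any accepted word labels a path from $q_0$ through states that can all reach $F$, so $(*)$ applies at every state visited. I would decompose such a path as follows: run along a simple (non-self-intersecting) path $v_1$ until reaching the first state, repeat its unique simple cycle $w_1$ some number of times, move along another simple path $v_2$ to a new state, repeat that state's unique cycle $w_2$, and so on, finishing with a simple path $v_{s+1}$ to an accepting state. Because simple paths visit each state at most once and $\Gamma$ is finite, the set of possible "skeletons" $(v_1, w_1, v_2, w_2,\ldots, v_s, w_s, v_{s+1})$ is finite, and $\mathcal{L}$ is exhibited as the finite union over such skeletons of languages $v_1 w_1^* v_2 w_2^* \cdots v_s w_s^* v_{s+1}$.

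The main obstacle is the codicity step in $(1)\Rightarrow(3)$: to get exponentially many truly distinct accepted words, one must ensure that different sequences of $w$'s and $w'$'s concatenate to different strings, which is precisely where the non-commutation argument and the Fine--Wilf style lemma enter. The path decomposition in $(3)\Rightarrow(2)$ requires some bookkeeping but is otherwise straightforward given uniqueness of the local simple cycle.
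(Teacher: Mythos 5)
The paper does not actually prove Proposition~\ref{sparse}: it cites \cite[Proposition 7.1]{BM} for it, so there is no in-paper argument to compare against (though Proposition~\ref{prop:sets} later gives an explicit quantitative form of the direction $(3)\Rightarrow(2)$ by inducting on $|Q|$ over the communication classes of the automaton). Your blind proof is a correct, self-contained argument along the standard lines, establishing the cycle $(2)\Rightarrow(1)\Rightarrow(3)\Rightarrow(2)$.

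Two small remarks. In $(1)\Rightarrow(3)$ you introduce the shortest return time $c$ for powers of $z$ and assert that $z^c$ is a first-return cycle at $q$; that intermediate claim is not obviously true at that point (a mixed prefix $z^j z'$, with $z'$ a proper prefix of $z$, could in principle return to $q$ before any full power $z^c$ does), and it is also unnecessary. The cleaner route is: if $w=z^a$ and $w'=z^b$ with, say, $a<b$, then $z^a=w$ is a proper non-trivial prefix of $w'$ that returns to $q$, contradicting that $w'$ is a first-return word; hence $a=b$ and $w=w'$. You never need $c$ at all. For $(3)\Rightarrow(2)$, the description ``run along a simple path until reaching the first state, repeat its unique cycle, then move to a new state'' is the right idea, but the deferred bookkeeping is genuinely where the work lies: the clean way to organize it is to first show that $(*)$ forces every non-trivial strongly connected component of the productive part of $\Gamma$ to be a single simple cycle with a uniquely determined label, so that the condensation DAG yields only finitely many ``skeletons'' (the ordered list of components visited, the entry and exit state in each, and the connecting transitions), and each skeleton contributes exactly one language $v_0 w_1^* v_1 \cdots w_s^* v_s$ to the finite union. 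With that structural lemma in place your sketch goes through without ambiguity about what ``a new state'' means or why $s$ is bounded.
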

A $k$-automatic subset $S\subseteq \mathbb{N}^d$ is then said to be \emph{sparse} if the sublanguage of $(\Sigma_k^d)^*$ corresponding to base-$k$ expansions of elements of $S$ is a sparse regular language.  Translating Proposition \ref{sparse} into the framework of automatic sets, we see that 
a $k$-automatic subset $S \subseteq \mathbb{N}^d$ is sparse if 
\begin{equation}
\label{eq:piS}
\pi_S(x) = \left|\{(n_1,\ldots ,n_d)\in S \colon n_1+n_2+\cdots +n_d \le x\}\right|={ O}((\log\, x)^d)
\end{equation} as $x$ tends to infinity.  We note that if $S$ is not sparse, then there is some $\alpha>0$ such that $\pi_S(x)>x^{\alpha}$ for $x$ large (cf. \cite[\S2.3]{Gawrychowski&Krieger&Rampersad&Shallit:2010}), and so there is a natural gap separating sparse and non-sparse automatic subsets of $\mathbb{N}^d$.

We will require the following description of special types of sparse sets, from which every sparse automatic subset of the natural numbers can be built by taking finite unions.
\begin{prop} \label{rem:sparse}
Let $k\ge 2$ be a natural number, let $s$ be a nonnegative integer and let $v_0,v_1,\ldots ,v_{s}, w_0,\ldots ,w_s$ be words in $\Sigma_k^*$.  If
$$S=\{[v_0w_1^*v_1 w_2^*\cdots v_{s-1}w_s^*v_s]_k\}$$ then there exist $c_0,\ldots ,c_s\in \mathbb{Q}$ and positive integers $\delta_1,\ldots ,\delta_s$ such that
\begin{equation*} \label{eq: form}
S = \left\{ c_0 + c_1 k^{\delta_s n_s} + c_2 k^{\delta_s n_s + \delta_{s-1} n_{s-1}}+\cdots + c_s k^{\delta_s n_s+\cdots +\delta_1 n_1} \colon n_1,\ldots ,n_s\ge 0\right\}.\end{equation*}

\end{prop}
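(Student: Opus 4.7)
The proof is a direct computation based on the base-$k$ concatenation rule and the closed form for the base-$k$ value of a repeated word, so I expect no serious obstacle, only careful bookkeeping.

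\textbf{Setup.} I would first introduce convenient notation. Set $\delta_i := |w_i|$ for $i=1,\ldots,s$. Using the identity $[xy]_k = [x]_k \, k^{|y|} + [y]_k$ applied iteratively, the base-$k$ value of a repeated word has the closed form
\[
[w_i^{n_i}]_k = [w_i]_k \cdot \frac{k^{\delta_i n_i} - 1}{k^{\delta_i} - 1} = \alpha_i\bigl(k^{\delta_i n_i} - 1\bigr),
\]
where $\alpha_i := [w_i]_k/(k^{\delta_i}-1) \in \mathbb{Q}$.

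\textbf{Main expansion.} The plan is to decompose the word
\[
u := v_0\, w_1^{n_1}\, v_1\, w_2^{n_2}\cdots v_{s-1}\, w_s^{n_s}\, v_s
\]
into its $2s+1$ atomic factors and apply the concatenation rule, so that each factor contributes its own value multiplied by $k$ raised to the length of everything to its right. Introducing the suffix lengths
\[
A_i := \sum_{j=i+1}^{s}\bigl(\delta_j n_j + |v_j|\bigr) \qquad (0\le i\le s, \text{ with } A_s = 0),
\]
and the analogous $B_i = |v_i| + A_i$ for the block $w_i^{n_i}$, the expansion becomes
\[
[u]_k \;=\; \sum_{i=0}^{s} [v_i]_k\, k^{A_i} \;+\; \sum_{i=1}^{s} \alpha_i\bigl(k^{\delta_i n_i}-1\bigr) k^{B_i}.
\]

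\textbf{Collecting coefficients.} I would then use the key algebraic identity $\delta_i n_i + B_i = A_{i-1}$, which is immediate from the definitions, together with $k^{B_i} = k^{|v_i|}\, k^{A_i}$, to rewrite the second sum as $\sum_{i=1}^{s}\alpha_i k^{A_{i-1}} - \sum_{i=1}^{s}\alpha_i k^{|v_i|} k^{A_i}$. Re-indexing and combining all terms by the exponents $A_0,\ldots,A_s$ yields
\[
[u]_k \;=\; \sum_{i=0}^{s} c_i'\, k^{A_i},
\]
where each $c_i' \in \mathbb{Q}$ depends only on the fixed data $v_i, w_i$ (and not on the $n_j$).

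\textbf{Finishing.} Finally, I would split the exponent $A_i = \gamma_i + \sum_{j=i+1}^{s}\delta_j n_j$, where $\gamma_i := \sum_{j=i+1}^{s}|v_j|$ is a constant, and absorb $k^{\gamma_i}$ into the coefficient to define $c_{s-i} := c_i'\, k^{\gamma_i} \in \mathbb{Q}$. After this reversal of indexing the displayed expression becomes exactly
\[
[u]_k \;=\; c_0 + c_1 k^{\delta_s n_s} + c_2 k^{\delta_s n_s + \delta_{s-1} n_{s-1}} + \cdots + c_s k^{\delta_s n_s + \cdots + \delta_1 n_1},
\]
which is the claimed form. The only substantive step is the identity $\delta_i n_i + B_i = A_{i-1}$, which is what makes the $\alpha_i k^{\delta_i n_i}\cdot k^{B_i}$ contribution line up with the $k^{A_{i-1}}$ scale rather than introducing new exponent patterns; everything else is routine manipulation, so I do not anticipate any genuine difficulty.
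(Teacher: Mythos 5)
Your computation is correct. The paper does not actually give a proof of this proposition — it simply cites Ginsburg and Spanier \cite{GSpan} and the proof of \cite[Lemma 3.4]{AB} — so you have supplied a self-contained argument where the paper defers to the literature; your decomposition into the $2s+1$ atomic factors, the geometric-series closed form for $[w_i^{n_i}]_k$, and the telescoping identity $\delta_i n_i + B_i = A_{i-1}$ are exactly what such a direct proof requires, and the final reindexing $c_{s-i}:=c_i'k^{\gamma_i}$ lands precisely on the nested exponent pattern $\delta_s n_s + \cdots + \delta_{s-m+1}n_{s-m+1}$ in the statement.

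One small caveat worth noting: you set $\delta_i := |w_i|$ and divide by $k^{\delta_i}-1$, so your argument (like the statement's requirement that the $\delta_i$ be \emph{positive} integers) tacitly assumes each $w_i$ is non-empty. This is harmless — every use of the proposition in the paper comes from Proposition \ref{sparse}(2) or Proposition \ref{prop:sets}, where the $w_i$ are explicitly non-trivial, and if some $w_i$ were empty one could drop the factor $w_i^*$ and merge $v_{i-1}v_i$, reducing $s$ — but it would be worth flagging explicitly, since the hypotheses of Proposition \ref{rem:sparse} as literally written only say the $w_i$ lie in $\Sigma_k^*$.
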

\begin{proof}
This result is due to Ginsburg and Spanier \cite{GSpan} (see also the proof of \cite[Lemma 3.4]{AB}). 
\end{proof}

\section{Background on $S$-unit equations} \label{sunit}

In this section we give an overview of the theory of $S$-unit equations. Specifically, we require a quantitative version of a result due to Evertse, Schlickewei and Schmidt (see~\cite[Theorem~1.1]{ESS02} and also~\cite[Theorem~6.1.3]{EG}).  We recall that for $z_1,\ldots ,z_n$ in a field $K$, the equation $z_1+\cdots +z_n=1$ is said to be \emph{non-degenerate} if no non-trivial subsum of the left-hand side is equal to zero; that is, whenever $I$ is a nonempty subset of $\{1,\ldots ,n\}$, we have $\sum_{i\in I} z_i\neq 0$. 

The $S$-unit theorem (see \cite[Theorem 6.1.3]{EG}) is a hugely significant result in Diophantine approximation, which we state for the reader's convenience. 

\begin{thm}\label{thm:unit}
  Let $K$ be a field of characteristic zero, let
  $a_1,\dots,a_n$ be nonzero elements of $K$, and let $H\subset(K^*)^n$ be a finitely generated multiplicative subgroup.
  Then there are only finitely many non-degenerate solutions $(x_1,\dots,x_n)\in H$ to the equation
  \begin{equation}
  a_1x_1+\cdots+a_nx_n=1.\label{eq:Sunit}
  \end{equation} 
  \end{thm}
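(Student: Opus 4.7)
The plan is to prove Theorem~\ref{thm:unit} by reducing to the case of a number field and then applying Schmidt's Subspace Theorem in its $S$-arithmetic form (due to Schlickewei) as the main tool, with an induction on $n$.

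First, I would reduce to the case in which $K$ is a number field. Since $H$ is finitely generated and $a_1,\dots,a_n$ are only finitely many elements, all relevant data lies in a finitely generated subfield $K_0\subseteq K$. Choosing a transcendence basis of $K_0$ over $\mathbb{Q}$ and then specializing the transcendentals to carefully chosen algebraic numbers---arranged so that none of the $a_i$, and none of the finitely many potential relations among the generators of $H$ that matter for the argument, degenerate under specialization---one may replace $K$ by a number field. After enlarging a finite set of places $S$ of $K$ to include all archimedean places together with all finite places where some $a_i$ or some generator of $H$ has nontrivial valuation, every coordinate of any solution $(x_1,\dots,x_n)\in H$ lies in the group $\mathcal{O}_S^{\ast}$ of $S$-units of $K$, and one may freely use the product formula on these coordinates.

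Next, I would induct on $n$. The base case $n=1$ gives the unique solution $x_1=a_1^{-1}$. For the inductive step, suppose for contradiction that infinitely many non-degenerate solutions exist. At each place $v\in S$, construct an auxiliary system of $n$ linearly independent linear forms $L_{v,1},\dots,L_{v,n}$ in $n$ variables with coefficients in $K$, one of which is $a_1X_1+\cdots+a_nX_n$; these are chosen so that, combining the equation itself with the product formula applied to the $S$-units $a_ix_i$, the double product $\prod_{v\in S}\prod_i |L_{v,i}(a_1x_1,\dots,a_nx_n)|_v$ is bounded above by a negative power of the height of the solution. Schmidt's Subspace Theorem (in Schlickewei's $S$-arithmetic form) then forces all but finitely many solutions to lie in a finite union of proper linear subspaces of $K^n$.

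Finally, on each such proper subspace there is a fixed nontrivial relation $c_1x_1+\cdots+c_nx_n=0$. Using this to eliminate one variable from the original $S$-unit equation produces, after possibly regrouping terms and passing to the maximal sub-equation dictated by the non-degeneracy of the original solution, an $S$-unit equation in strictly fewer variables whose unknowns still lie in a finitely generated multiplicative group; the inductive hypothesis then bounds the number of solutions lying in that subspace, and summing over the finitely many subspaces closes the argument. The hardest part by far is the application of the Subspace Theorem itself: choosing the correct linear forms at each place and verifying the required height inequality via the product formula is the deep technical core of the proof, and essentially all of the arithmetic content of Theorem~\ref{thm:unit} is concentrated in that step, with the reductions above and the inductive bookkeeping in each proper subspace being comparatively routine.
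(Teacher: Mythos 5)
The paper does not prove Theorem \ref{thm:unit} at all: it is quoted as a known result, with the proof deferred to Evertse--Schlickewei--Schmidt \cite{ESS02} and \cite[Theorem 6.1.3]{EG}, and only the quantitative version (Theorem \ref{thm:av}, due to Amoroso--Viada) is actually used later. So there is no in-paper argument to compare yours against; the only question is whether your sketch is a sound outline of the known proof. In broad strokes it is: reduction to a number field by specialization, then the $S$-arithmetic Subspace Theorem combined with induction on $n$ and descent into the proper subspaces it produces is exactly the classical route, and invoking the Subspace Theorem as a black box is legitimate.

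The genuine soft spot is the specialization step as you describe it. To transfer a putative infinite family of non-degenerate solutions in $H$ down to a number field, you need the specialization to be injective on $H$ (otherwise infinitely many distinct solutions can collapse onto finitely many images), and triviality of the kernel on an infinite group is not a matter of avoiding ``finitely many potential relations among the generators'': it is infinitely many conditions, and producing such a specialization is a nontrivial lemma in its own right --- a substantial part of the work in the finitely generated setting lives there. Likewise you cannot arrange in advance that non-degenerate solutions remain non-degenerate, since the solutions are not known before specializing; the standard repair is a posteriori, absorbing images that acquire vanishing subsums into an induction on the number of terms. Finally, the Subspace Theorem step is asserted rather than executed (the partition of solutions into finitely many classes according to which coordinates are small at which places of $S$, the choice of forms per class, the height inequality, and the recovery of the original tuple after eliminating a variable on a subspace), which is acceptable in a sketch but is precisely the technical core. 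None of this is unfixable --- it is the proof in the literature --- but as written these are real gaps, and they are why the paper treats the statement as a citable black box rather than proving it.
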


We will use a quantitative version of the $S$-unit theorem. There are a number of quantitative versions (see for example \cite{AV, EG, ESS02, Sch}), but we find the following version, due to Amoroso and Viada \cite[Theorem 6.2]{AV}, most convenient for our purposes.  We note that Amoroso and Viada assume their fields are algebraically closed throughout, but for the statement given below this hypothesis is unnecessary since we can embed a field into its algebraic closure.  We recall that a finitely generated abelian group is isomorphic to the direct sum of a finite group along with a group isomorphic to $\mathbb{Z}^r$ for some $r\ge 0$; the quantity $r$ is uniquely determined by the group and is called the \emph{rank} of the group.  

\begin{thm} \label{thm:av} 
Let $K$ be a field of characteristic zero, let $a_1, \dots, a_n $ be nonzero elements of $K$, and let $\Gamma$ be a finitely generated multiplicative subgroup of $(K^*)^n$ of rank $r < \infty$. Then there are at most $$(8n)^{4n^4(n+r+1)}$$ non-degenerate solutions to the equation $$a_1 x_1+\cdots + a_n x_n =1$$  with $(x_1, \dots, x_n) \in \Gamma$.
\end{thm}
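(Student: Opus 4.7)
The plan is to follow the Amoroso–Viada strategy, which combines a height-geometric analysis on the algebraic torus $\mathbb{G}_m^n$ with a quantitative Bogomolov-type lower bound for the essential minimum of a subvariety of $\mathbb{G}_m^n$ not contained in any proper torus translate. First, I would pass to the algebraic closure $\overline{K}$: this does not alter the rank $r$ of the finitely generated subgroup $\Gamma$, and it preserves the notion of non-degeneracy, so we may assume $K$ is algebraically closed of characteristic zero. Geometrically, the non-degenerate solutions $(x_1,\ldots,x_n)\in\Gamma$ to $a_1 x_1+\cdots+a_n x_n=1$ are precisely the $\Gamma$-points of the locally closed subvariety $V\subseteq \mathbb{G}_m^n$ cut out by that linear relation with every proper-subsum coordinate hyperplane removed.

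The main step is a stratification of the solution set according to which proper subtorus cosets it meets. If every non-degenerate solution lies on some proper subtorus coset, then restricting the original equation to that coset reduces both the number of variables and the rank of the relevant subgroup, so an induction on $n$ handles this case. The core case is therefore the one in which no non-degenerate solution lies on any proper translate of a subtorus. Here one invokes an explicit lower bound, of the type established by Amoroso and David and sharpened by Amoroso–Viada, for the essential minimum of such a subvariety; this bound depends only on $n$ and codimension data, not on $V$ itself. Combined with the standard fact that a finitely generated multiplicative group of rank $r$ contains at most $O(T^r)$ elements of height at most $T$, the height lower bound is converted into an upper bound on the number of $\Gamma$-points of $V$ that are not trapped on proper torus translates.

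The explicit constant $(8n)^{4n^4(n+r+1)}$ is assembled by bookkeeping through these two ingredients: at each of up to $n$ inductive levels one pays a combinatorial factor from the number of possible subsum partitions (at most $2^n$) and an arithmetic factor from the height-counting step (polynomial in $r$ of degree controlled by $n$), and the exponent $n^4$ reflects the deterioration of the essential-minimum bound as the subtorus dimension drops during the induction. The main obstacle is precisely the quantitative Bogomolov input: obtaining a lower bound on essential minima whose shape is uniform enough and strong enough to collapse into the clean form $(8n)^{4n^4(n+r+1)}$ is the deep arithmetic-geometric step, and it is what separates this approach from the alternative route through the quantitative subspace theorem of Evertse–Schlickewei–Schmidt, which in general yields a worse dependence on $r$.
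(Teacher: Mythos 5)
The paper does not prove this statement; it quotes it verbatim from Amoroso and Viada \cite[Theorem 6.2]{AV}, and the only added content is the observation (which you correctly reproduce in your first paragraph) that the algebraically-closed hypothesis in \cite{AV} may be dropped by embedding $K$ into $\overline{K}$, since the rank of $\Gamma$ and the non-degeneracy of a solution are unchanged under that embedding. So there is no ``paper's own proof'' to compare against: the expected answer here is a citation, not an argument.

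Taken on its own terms, your sketch names the right ingredients of the Amoroso--Viada route (effective Bogomolov-type lower bounds for essential minima of subvarieties of $\mathbb{G}_m^n$, a counting estimate for points of a rank-$r$ group of bounded height, and a reduction by induction), but it does not amount to a proof, for two reasons. First, the explicit constant $(8n)^{4n^4(n+r+1)}$ \emph{is} the theorem; a quantitative $S$-unit statement with the constant ``assembled by bookkeeping'' is a roadmap, not a derivation, and the main difficulty you defer to is precisely the content to be supplied. Second, the stratification step as written is not well posed: ``non-degenerate'' in this context is the additive condition that no proper subsum $\sum_{i\in I}a_i x_i$ vanishes, whereas ``lying on a proper subtorus coset'' is a multiplicative condition, and every point of $\mathbb{G}_m^n$ lies on many proper subtorus cosets, so ``the case in which no non-degenerate solution lies on any proper translate of a subtorus'' is vacuous as stated. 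The actual argument in \cite{AV} separates solutions by (normalized) height, uses the effective Bogomolov/Dobrowolski bound to control the points of small height via the geometry of the hypersurface $a_1x_1+\cdots+a_nx_n=1$, and uses a gap principle together with the $O(T^r)$ count for elements of $\Gamma$ of height at most $T$ to control the large-height solutions; the vanishing-subsum combinatorics enter when reducing to the ``anomalous'' subvarieties, not as a blanket coset stratification. Without those two pieces made precise, what you have is a correct pointer to the literature, which in this paper is exactly what Theorem \ref{thm:av} is.
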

We note that all versions of the $S$-unit theorem are ineffective, except in the case when $n\le 2$.  

\section{Proof of Theorem \ref{thm:main1}} \label{cobham}

In this section, we prove the following version of Theorem \ref{thm:main1}.  

\begin{thm} \label{thm:main2}
Let $k$ and $\ell$ be multiplicatively independent positive integers, let $d\ge 2$, and let 
$\Gamma= (Q, \Sigma_k^d, \delta, q_0, F)$ and $\Gamma' = (Q', \Sigma_{\ell}^d, \delta', q_0', F')$ be deterministic finite-state automata accepting sparse regular languages $\mathcal{L}\subseteq (\Sigma_k^d)^*$ and $\mathcal{L}'\subseteq (\Sigma_{\ell}^d)^*$. If $X\subseteq \mathbb{N}^d$ is the set of $d$-tuples of natural numbers whose base-$k$ expansions are elements of $\mathcal{L}$ and $Y\subseteq \mathbb{N}^d$ is the set of $d$-tuples of natural numbers whose base-$\ell$ expansions are elements of $\mathcal{L}'$, then 
\begin{equation}
\label{eq:intersectionbound}
|X\cap Y|\le 
k^{d|Q|}\cdot \ell^{d|Q'|}\cdot \left(8(|Q|+|Q'|-1)\right)^{10d(|Q|+|Q'|)^5}.\end{equation}

\end{thm}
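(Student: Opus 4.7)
The plan is to reduce the intersection problem to counting solutions of $S$-unit equations and then apply Theorem \ref{thm:av}. First, I apply Proposition \ref{sparse}(2) together with the $d$-dimensional version of Proposition \ref{rem:sparse} to decompose $X$ and $Y$ into finite unions of simple parametric pieces. Each piece of $X$ has the form
$$X_a=\bigl\{\bigl(F_{a,1}(n),\ldots,F_{a,d}(n)\bigr):n\in\mathbb{N}^{s_a}\bigr\},$$
where each $F_{a,i}(n)=c_{0,i}+c_{1,i}k^{\alpha_1(n)}+\cdots+c_{s_a,i}k^{\alpha_{s_a}(n)}$ with $\alpha_j(n)$ a $\mathbb{Z}_{\ge 0}$-linear form in $n_1,\ldots,n_{s_a}$ that is strictly increasing in $j$; analogously, $Y=\bigcup_b Y_b$ in base $\ell$ with parameters $m\in\mathbb{N}^{t_b}$. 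A crude accounting of the path-and-loop data in the DFAs---each piece is specified by a choice of at most $|Q|$ simple-loop words of length at most $|Q|$ over the alphabet $\Sigma_k^d$---bounds the number of pieces of $X$ by $k^{d|Q|}$ and of $Y$ by $\ell^{d|Q'|}$.

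Fix one such pair $(X_a,Y_b)$. Each point of $X_a\cap Y_b$ arises from a parameter tuple $(n,m)\in\mathbb{N}^{s_a+t_b}$ satisfying the $d$ coordinate equations $F_{a,i}(n)=G_{b,i}(m)$. Selecting a coordinate $i_0$ in which the equation is non-trivial and dividing through by any nonzero summand yields an $S$-unit equation
$$\gamma_1 y_1+\cdots+\gamma_N y_N=1,\qquad y_j\in H:=\langle k,\ell\rangle\subset\mathbb{Q}^*,$$
with $N\le s_a+t_b+1\le|Q|+|Q'|-1$. Since $k$ and $\ell$ are multiplicatively independent, $H$ has rank $2$. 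The strict monotonicity of the linear forms $\alpha_j$ and $\beta_j$ in the exponents ensures that each parameter tuple maps to a unique solution $(y_j)\in H^N$; hence $|X_a\cap Y_b|$ is bounded above by the total number of solutions of this $S$-unit equation.

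Theorem \ref{thm:av} with $n=N$ and $r=2$ bounds non-degenerate solutions by $(8N)^{4N^4(N+3)}$. Degenerate solutions are handled by the standard induction on $N$: each proper vanishing subsum produces an $S$-unit equation in strictly fewer variables, and summing the recursive contributions over all subsum patterns yields a bound of the same shape up to a polynomial-in-$N$ inflation of the exponent. The resulting per-pair bound, after such bookkeeping and absorbing a (possibly non-optimal) factor of $d$ into the exponent to cleanly express the dependence on the dimension, is $(8(|Q|+|Q'|-1))^{10d(|Q|+|Q'|)^5}$. Multiplying by the piece counts $k^{d|Q|}\ell^{d|Q'|}$ yields the claimed inequality \eqref{eq:intersectionbound}.

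The principal obstacle I expect is the degeneracy step: the inductive handling of vanishing subsums must be carried out carefully enough that the final exponent is genuinely polynomial in $|Q|+|Q'|$ rather than exponential. A secondary challenge is verifying injectivity of the parametrization (so that $S$-unit solution counts truly bound the intersection count), which requires exploiting uniqueness of base-$k$ representations together with the strict monotonicity of the exponents; a third minor difficulty is making the piece-count bound $k^{d|Q|}$ precise for a minimal sparse DFA.
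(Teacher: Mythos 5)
Your overall plan (decompose into simple sparse pieces via Proposition \ref{prop:sets} and \ref{rem:sparse}, reduce to $S$-unit equations, apply Theorem \ref{thm:av}, multiply by piece counts) matches the paper's strategy, but there is a genuine gap in the central counting step. You propose to bound $|X_a\cap Y_b|$ by the number of solutions of the $S$-unit equation coming from \emph{one} coordinate $i_0$, relying on injectivity of the map from parameter tuples $(n,m)$ to $S$-unit solutions. This injectivity can fail: the coefficients $c_{j,i_0}$ supplied by Proposition \ref{rem:sparse} can vanish (e.g.\ when $\proj_{i_0}(w_j)$ consists of zeros interacting with the surrounding digits), so the coordinate-$i_0$ equation need not see all parameters, and distinct intersection points can project to the same value in coordinate $i_0$. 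Moreover there need not exist any single coordinate on which $\proj_{i_0}$ is injective on $X_a\cap Y_b$. The paper sidesteps this entirely by never choosing a coordinate: it uses the inclusion
\[
W_p\cap Z_q \subseteq (W_{p,1}\cap Z_{q,1})\times\cdots\times(W_{p,d}\cap Z_{q,d}),
\]
bounds each one-dimensional intersection $W_{p,i}\cap Z_{q,i}$ via Proposition \ref{sets} (and in one dimension an element of the intersection \emph{is} the value $A$ of the sum, so no injectivity issue arises), and multiplies. This is exactly why the exponent in \eqref{eq:intersectionbound} picks up a factor $d$; in your write-up that $d$ appears only as an unjustified ``absorb a factor of $d$.'' To repair your argument you would need to replace the single-coordinate reduction with the product-over-coordinates reduction.

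Two smaller issues. First, you take $r=2$ in Theorem \ref{thm:av}; this is incorrect. The group $\Gamma$ there is a subgroup of $(K^*)^N$, not of $K^*$, so with each $y_j\in\langle k,\ell\rangle$ the relevant rank is proportional to $N$ (the paper uses the slightly tighter $\Gamma=H_1^n\times H_2^{m-1}$ of rank $2n+m-1$ in Lemma \ref{lem:ABC}). Your exponent $4N^4(N+3)$ should be roughly $4N^4(3N+1)$; this doesn't change the $(8N)^{O(N^5)}$ shape but the $r=2$ choice is a misreading of the theorem. Second, your piece count $k^{d|Q|}$ is too small: Proposition \ref{prop:sets} actually gives a bound with a factorial factor, $(|Q|-1)!\bigl((k^d)^{|Q|-1}+\cdots+1\bigr)$, which the paper bounds by $|Q|^{|Q|}k^{d|Q|}$ and then absorbs the $|Q|^{|Q|}$ into the final $\bigl(8(|Q|+|Q'|-1)\bigr)^{\cdots}$ term. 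Without that absorption step, the factor $k^{d|Q|}\ell^{d|Q'|}$ you write in front does not account for all the pieces.
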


We note that we have not attempted to optimize the upper bounds, as to do so would lead to unwieldy expressions.  Nevertheless, the bounds we obtain cannot be significantly improved using our methods. We begin with a basic estimate. 

\begin{prop}\label{prop:sets}
Let $N\ge 2$, let $\Sigma$ be a finite alphabet of size $N$, 
and let $\Gamma= (Q, \Sigma, \delta, q_0, F)$ be a 
deterministic finite automaton accepting a sparse language $\mathcal{L}$. Then $\mathcal{L}$ is a 
finite (possibly empty) union of at most $$(|Q|-1)!(N^{|
Q|-1}+N^{|Q|-2}+\cdots +1)$$ languages of the form
$$\{v_0 w_1^* v_1 w_2^* \cdots v_{s-1} w_s^* v_{s}\}$$ with $w_1,\ldots ,w_s, v_1,\ldots , v_{s}$ words in $\Sigma^*$ in which the $w_i$ are non-empty but the $v_i$ may be empty and with $|w_1|+\cdots +|w_s|\le |Q|-1$ and $|v_0|+\cdots +|v_{s}|\le N(|Q|-1)$.  
\end{prop}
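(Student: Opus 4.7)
The plan is to give a quantitative refinement of Proposition \ref{sparse}(2) by analyzing accepting paths in $\Gamma$ explicitly. First I would preprocess $\Gamma$ by discarding states that are either unreachable from $q_0$ or cannot reach $F$; this leaves $\mathcal{L}$ unchanged and only decreases $|Q|$, so I may assume every state participates in some accepting path. After this reduction, every remaining state satisfies the hypothesis of property $(*)$ from Proposition \ref{sparse}(3), so each state $q$ admits at most one primitive loop $w_q$, namely the unique shortest non-trivial word with $\delta(q,w_q)=q$ and no non-trivial proper prefix returning to $q$. The length of $w_q$ is at most $|Q|$, since a primitive cycle visits no state twice.

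The core step is a greedy decomposition of accepting paths. Given $w\in \mathcal{L}$ with state sequence $q_0=s_0\to s_1\to\cdots\to s_n\in F$, I would scan left-to-right, and at the first index $j$ such that $s_j$ equals some earlier $s_i$, I would extract the intervening cycle $s_i\to\cdots\to s_j$; by $(*)$ this cycle is a power of the primitive loop at $s_i$, so I record $s_i$ as a loop state with the appropriate multiplicity, remove the cycle from the path, and iterate on the shortened sequence. The process terminates in a simple path, producing a decomposition
\begin{equation*}
w=v_0\,w_{p_1}^{m_1}\,v_1\,w_{p_2}^{m_2}\cdots w_{p_s}^{m_s}\,v_s
\end{equation*}
in which $p_1,\dots,p_s$ are pairwise distinct loop states and each $v_i$ traces a simple sub-path (from $q_0$ to $p_1$, from $p_i$ to $p_{i+1}$, or from $p_s$ to an accepting state). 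A further application of $(*)$ shows that the underlying graph-theoretic cycles of the primitive loops at distinct $p_i$ must be vertex-disjoint (otherwise a shared state would inherit two distinct primitive loops, contradicting uniqueness), giving the bound $|w_1|+\cdots+|w_s|\le |Q|-1$. Each $v_i$ is a simple walk, hence of length at most $|Q|-1$, and a short combinatorial argument shows that the segments collectively use at most $N(|Q|-1)$ letters, since they are constrained to avoid the interior vertices of the chosen loops.

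Finally I would enumerate the possible templates $(p_1,\dots,p_s;\,v_0,\dots,v_s)$. The ordered tuples of distinct loop states of length $s\le|Q|-1$ account for at most $\sum_{s=0}^{|Q|-1}(|Q|-1)!/(|Q|-1-s)!\le(|Q|-1)!\cdot e$ choices, which I would bound crudely by a small constant multiple of $(|Q|-1)!$ (absorbed into the final estimate). Given the loop states, the primitive words $w_{p_i}$ are determined, and the connecting simple paths contribute a further factor that is bounded above by $N^{|Q|-1}+N^{|Q|-2}+\cdots+1$, the number of words of length at most $|Q|-1$ over $\Sigma$. Multiplying yields the stated union bound. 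The main obstacle I anticipate is making the greedy peeling argument genuinely rigorous, so that the extracted loop states really are pairwise distinct and the surrounding simple-path segments remain coherent after each peel; both claims rest delicately on the uniqueness clause in $(*)$, and one has to be careful that the same primitive loop, visited at two different points along the path, is not counted as two distinct loop states.
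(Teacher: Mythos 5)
Your approach is genuinely different from the paper's. The paper proceeds by induction on $|Q|$: it peels off the \emph{first} loop (the unique cycle through $q_0$, if any), writes $\mathcal{L}=\mathcal{L}_0\cup\mathcal{L}_1$ according to whether words stay in the strongly connected component $[q_0]$ or leave it, and recurses on the smaller automaton obtained by deleting $[q_0]$; the arithmetic then compounds to give exactly $(|Q|-1)!(N^{|Q|-1}+\cdots+1)$. You instead propose a single global decomposition of every accepting path (greedy peeling of all cycles) followed by a direct enumeration of the resulting templates. The underlying structural observations---each state that can reach $F$ has at most one primitive loop, the primitive loops visited along any path lie in distinct strongly connected components and hence are vertex-disjoint, and the residual path after peeling is simple---are correct, and the peeling really does produce a template $v_0 w_{p_1}^{m_1}v_1\cdots v_s$ with pairwise distinct $p_i$. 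This is a legitimate alternative route, and it has the appeal of being non-recursive and of making the structure of sparse accepting paths very explicit.

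There are, however, two concrete gaps. First, the bound $|w_1|+\cdots+|w_s|\le |Q|-1$: vertex-disjointness of the loops only gives $\le |Q|$, and you have not explained where the extra $-1$ comes from. You need to argue that some state is not on any cycle. This does hold (because $N\ge 2$ forces some transition to leave any would-be ``all-cycle'' configuration, or because pruning already removed a state), but your proposal never makes that argument, and without it the claimed $|Q|-1$ is unjustified; the same issue appears in the claim that $s\le |Q|-1$. Second, and more seriously, the enumeration does not deliver the stated constant. You count ordered tuples of distinct loop states, obtaining $\sum_{s}(|Q|-1)!/(|Q|-1-s)! \approx e\,(|Q|-1)!$, and then multiply by the number of connecting walks, which you bound by $N^{|Q|-1}+\cdots+1$. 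The product is about $e$ times the target $(|Q|-1)!(N^{|Q|-1}+\cdots+1)$, and your remark that the extra constant can be ``absorbed'' is not available here: the proposition asserts the bound $(|Q|-1)!(N^{|Q|-1}+\cdots+1)$ itself, not a constant multiple of it. Moreover the count is structurally redundant---once the simple walk $v_0v_1\cdots v_s$ is fixed, the ordered tuple $(p_1,\ldots,p_s)$ is determined by which vertices of the walk carry loops, so a cleaner count would be (simple walks) times (subsets of loop-carrying vertices on the walk); but that yields something like $2^{|Q|-1}(N^{|Q|-1}+\cdots+1)$, which is again not the stated bound and is incomparable to it across the range of $|Q|$. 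To obtain the paper's exact expression by your method, you would need a tighter pairing between loop-state data and connecting-path data than a raw product allows; as written the counting step does not close.

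Two smaller points worth fixing: after discarding states, the object is no longer a DFA (the transition function is partial), so one should either carry this explicitly or observe that the argument never needs totality; and property $(*)$ is stated in Proposition \ref{sparse} only for \emph{minimal} automata, so you should add the one-line justification that any DFA (or partial DFA) accepting a sparse language satisfies $(*)$ at every co-accessible state, since two distinct primitive loops at such a state would generate exponentially many accepted words. Finally, your own argument in fact gives $|v_0|+\cdots+|v_s|\le|Q|-1$ (the residual walk is simple), which is sharper than the $N(|Q|-1)$ in the statement; there is no need to invoke a ``short combinatorial argument'' to reach the weaker bound.
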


\begin{proof} 
Suppose towards a contradiction that this is not the case and pick a DFA $(Q, \Sigma, \delta, q_0,F)$ for which the conclusion to the statement of the proposition does not hold with $|Q|$ minimal.  We note that the result holds when $|Q|=1$, as the only sparse set accepted by a one-state automaton with input alphabet of size at least two is the empty set.  Thus we may assume that $|Q|>1$.  

We put a transitive binary relation $\preceq$ on $Q$ by declaring that $q\preceq q'$ for $q,q'\in Q$ if there is a word $w\in \Sigma^*$ such that $\delta(q,w)=q'$.  We then declare that two states $q,q'$ are equivalent if $q\preceq q'$ and $q'\preceq q$.  Then this relation is reflexive as $\delta(q,\epsilon)=q$, where $\epsilon$ is the empty word; and it is symmetric and transitive by construction. We let $[q]$ denote the equivalence class of $q$.  Then $\preceq$ induces a partial order on the equivalence classes. We let $r$ denote the size of the equivalence class $[q_0]$. Since $\mathcal{L}$ is non-empty, there is at least one path from $q_0$ to an accepting state.  In particular, by Proposition \ref{sparse} (3), we have that there is at most one cycle based at $q_0$ and since it passes through all states in $[q_0]$, this cycle, if it exists, is some word $w_1$ of length $r$.  We note that if $r\ge 2$ then there must be a cycle based at $q_0$, but if $r=1$ it is possible that $\delta(q_0,w)=q_0$ if and only if $w$ is the empty word.  
We now consider two cases corresponding to these possibilities.  The simpler case is when $\delta(q_0,w)=q_0$ only if $w$ is the empty word.  In this case, $[q_0]=q_0$ and for each $x\in \Sigma$, we let $\mathcal{L}_x$ denote the set of all words $w\in \Sigma^*$ whose first letter is $x$ and for which $w\in \mathcal{L}$.  Then $\delta(q_0, u_x)\in Q\setminus \{q_0\}$ for every $u_x\in \mathcal{L}_x$.  Then since $q_0$ is only equivalent to itself, we see that $\mathcal{L}_x= x\mathcal{E}_x$, where $\mathcal{E}_x$ is the regular language accepted by the automaton
$\Gamma_x:=(Q\setminus \{q_0\}, \Sigma, \delta, \delta(q_0,x),F\setminus \{q_0\})$.  

\noindent Then by minimality of $|Q|$, we have that $\mathcal{E}_x$ is a union of at most $$(|Q|-2)!(N^{|Q|-2}+\cdots +1)$$ sets of the form
$$\{v_0 w_1^* v_1 w_2^* \cdots v_{s-1} w_s^* v_{s}\}$$
with $w_1,\ldots ,w_s, v_0,\ldots , v_{s}$ words in $\Sigma^*$ in which the $w_i$ are non-empty but the $v_i$ may be empty and with $|w_1|+\cdots +|w_s|\le |Q|-2$ and $|v_0| + \cdots +|v_{s}| \le N(|Q|-2)$. 
Then $\mathcal{L}_x$ is a union of at most $(|Q|-2)! (N^{|Q|-2}+ \cdots+1)$ sets of the form
$$\{(xv_0) w_1^* \cdots v_{s-1} w_s^* v_{s}\}$$
with $w_1,\ldots ,w_s, v_0,\ldots , v_{s}$ words in $\Sigma^*$  in which the $w_i$ are non-empty but the $v_i$ may be empty and with $|w_1|+\cdots +|w_s|\le |Q|-1$ and $|xv_0|+\cdots +|v_{s}|\le N(|Q|-1)$, since $N\ge 1$. 
Then since $\mathcal{L}$ is the union of $\mathcal{L}_x$ for $x\in \Sigma$ we see that $\mathcal{L}$ is a union of at most $(|Q|-2)!(N^{|Q|-1}+N^{|Q|-2}+\cdots +N)$ sets of the form 
$$\{v_0 w_1^* v_1 w_2^* \cdots v_{s-1} w_s^* v_{s}\}$$ with $w_1,\ldots ,w_s, v_0,\ldots , v_{s}$ words in $\Sigma^*$ in which the $w_i$ are non-empty but the $v_i$ may be empty and with $|w_1|+\cdots +|w_s|\le |Q|-1$ and $|v_0|+\cdots +|v_{s}|\le N(|Q|-1)$.  Thus we obtain the result in this case.

We next consider the case when there is a unique cycle $w_1$ of length $r\ge 1$ based at $q_0$.  In particular, $ [q_0] $ has size $r$.
Then in this case we can write $\mathcal{L}=\mathcal{L}_0\cup \mathcal{L}_1$, where $\mathcal{L}_0$ is the set of words $w$ in $\mathcal{L}$ for which $\delta(q_0,w)\in [q_0]$ and $\mathcal{L}_1$ is the set of words $w\in \mathcal{L}$ for which $\delta(q_0,w)\not\in [q_0]$.  By construction every word in $\mathcal{L}_0$ is of the form $w_1^* v$ where $v$ is a proper prefix of $w_1$.  In particular, $\mathcal{L}_0$ is a union of at most $r$ sets of the desired form. We next consider $\mathcal{L}_1$.
If $w\in \mathcal{L}_1$ then $w$ can be written as $uxv$ with $u,v \in \Sigma^*, x \in \Sigma$ such that $\delta(q_0,u)\in [q_0]$ but $\delta(q_0,ux)\not\in [q_0]$.  Then we may write $\mathcal{L}_1$ as a union of $|Q|-r$ sublanguages
$\mathcal{L}_{1,q}$ for each $q\in Q\setminus [q_0]$, where $\mathcal{L}_1$ is the set of words in $\mathcal{L}$ of the form $uxv$ with $\delta(q_0,u)\in [q_0]$, $\delta(q_0,ux)=q$. 
 
Then each $\mathcal{L}_{1,q}$ is a finite union of languages of the form $w_1^* z x \mathcal{E}_q$ where $z$ is a proper prefix of $w_1$, $x\in \Sigma$ and $\delta(q_0,z x)=q$ and $\mathcal{E}_q$ is a sparse language accepted by an automaton with state set $Q \setminus [q_0]$.  
In particular, by minimality of $|Q|$, each $\mathcal{E}_q$ is a finite union of at most $(|Q|-r-1)!(N^{|Q|-r-1}+\cdots+1)$ sets of the form
$$\{v_1 w_2^* v_2 w_3^* \cdots v_{s-1} w_s^* v_{s}\}$$ with $|w_2|+\cdots + |w_s|\le |Q|-r-1$ and $|v_1|+\cdots +|v_{s}|\le N(|Q|-r-1)$. Then since $w$ has at most $r$ proper prefixes and since there are at most $N$ choices for $x$, we see that $\mathcal{L}_{1,q}$ is a union of at most $(rN)(|Q|-r-1)! (N^{|Q|-r-1}+\cdots + N+1)$ sets of the form
  $$\{w_1^* (zxv_1) w_2^* v_2 w_3^* \cdots v_{s-1} w_s^* v_{s}\}$$ with
  $|w_1|+\cdots +|w_s|\le |Q|-1$ and $|zxv_1|+\cdots +|v_{s}|\le N(|Q|-r-1)+r \le N(|Q|-1)$.
  Thus $\mathcal{L}$ is a union of at most $(|Q|-r)rN (|Q|-r-1)! (N^{|Q|-r-1}+\cdots + N+1) +r$ sets of the desired form, where the contribution of $r$ comes from considering our decomposition of $\mathcal{L}_0$ and the $|Q|-r$ factor comes from considering the languages $\mathcal{L}_{1,q}$ for $q\in Q\setminus [q_0]$. Finally, since $N\ge 2$ we have $r\le N^{r-1}$ and so
    \begin{align*}
  &~ (|Q|-r)rN(|Q|-r-1)! (N^{|Q|-r-1}+\cdots + N+1) +r\\
  & \le (|Q|-r)N^{r-1}\cdot N (|Q|-r-1)! (N^{|Q|-r-1}+\cdots + N+1) + N^{r-1}\\
  &= (|Q|-r)! (N^{|Q|-1}+\cdots + N^{r}) + N^{r-1} \\
  &\le (|Q|-1)!(N^{|Q|-1}+\cdots + 1).
  \end{align*}
  The result follows.
\end{proof}

We now make use of the estimate in Theorem \ref{thm:av}.

\begin{lemma} Let $k$ and $\ell$ be multiplicatively independent integers, let $m$ and $n$ be positive integers, and let $a_1,\ldots, a_n, b_{1},\ldots, b_{m}$ be nonzero rational numbers.  Then there are at most $$\left(8(n+m-1)\right)^{10(n+m)^5-4(n+m-1)^4}$$ to the equation
$$a_1X_1+\cdots +a_n X_n + b_{1} Y_{1}+\cdots + b_{m} Y_{m}=0$$ in which each $X_i$ is a power of $k$, each $Y_i$ is a power of $\ell$ and no proper non-trivial subsum of the left-hand side vanishes. 
\label{lem:ABC}
\end{lemma}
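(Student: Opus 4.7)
The plan is to reduce Lemma \ref{lem:ABC} to a direct application of the Amoroso--Viada bound (Theorem \ref{thm:av}). The given equation is already linear in monomials coming from a finitely generated multiplicative group, so after a normalization step it falls squarely into the form of an $S$-unit equation.

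First I would normalize by dividing through by $-a_1 X_1$, which turns the equation into
$$\frac{a_2}{-a_1}\,z_1 + \cdots + \frac{a_n}{-a_1}\,z_{n-1} + \frac{b_1}{-a_1}\,z_n + \cdots + \frac{b_m}{-a_1}\,z_{n+m-1} \;=\; 1,$$
where $z_i := X_{i+1}/X_1$ for $1\le i\le n-1$ and $z_{n-1+j} := Y_j/X_1$ for $1\le j\le m$. The tuple $(z_1,\ldots,z_{n+m-1})$ lies in a multiplicative subgroup $\Gamma \subset (\mathbb{Q}^*)^{n+m-1}$ obtained by varying the integer exponents of $k$ in each $X_i$ and of $\ell$ in each $Y_j$.

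The key structural step is the rank computation. The group $\Gamma$ is the image of the homomorphism $\mathbb{Z}^{n+m}\to(\mathbb{Q}^*)^{n+m-1}$ sending $(\alpha_1,\ldots,\alpha_n,\beta_1,\ldots,\beta_m)$ to $(k^{\alpha_2-\alpha_1},\ldots,k^{\alpha_n-\alpha_1},k^{-\alpha_1}\ell^{\beta_1},\ldots,k^{-\alpha_1}\ell^{\beta_m})$. Because $k$ and $\ell$ are multiplicatively independent, any element of the kernel must satisfy $k^{\alpha_{i+1}-\alpha_1}=1$ and $k^{-\alpha_1}\ell^{\beta_j}=1$ for all $i,j$, forcing $\alpha_1=\cdots=\alpha_n=0$ and $\beta_1=\cdots=\beta_m=0$. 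Hence this map is injective, so $\Gamma$ has rank exactly $n+m$, and distinct solutions $(X_1,\ldots,Y_m)$ produce distinct points of $\Gamma$.

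Next I would verify that non-degeneracy transfers: any nontrivial vanishing subsum on the left-hand side of the normalized equation would, after multiplying through by $-a_1 X_1$, yield a proper nontrivial subsum of the original equation that equals zero, contradicting the hypothesis. With non-degeneracy in place, Theorem \ref{thm:av} applied with $n+m-1$ variables and rank $r=n+m$ produces at most
$$\bigl(8(n+m-1)\bigr)^{4(n+m-1)^4\bigl((n+m-1)+(n+m)+1\bigr)} = \bigl(8(n+m-1)\bigr)^{8(n+m)(n+m-1)^4}$$
admissible tuples $(z_1,\ldots,z_{n+m-1})$, which by injectivity is also an upper bound on the number of solutions to the original equation. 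The final step is a short polynomial estimate: setting $N:=n+m\ge 2$, one checks $8N(N-1)^4 \le 10N^5-4(N-1)^4$, equivalently $4(2N+1)(N-1)^4\le 10N^5$, which is immediate from $(N-1)^4\le N^4$ and $2N+1\le \tfrac{5}{2}N$ for $N\ge 2$, yielding the bound stated in the lemma. The one step that needs real care is the injectivity/rank argument for $\Gamma$, since without it the Amoroso--Viada count over $\Gamma$ would not translate into a count of $(X,Y)$-solutions; everything else is bookkeeping.
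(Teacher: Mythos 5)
Your argument is correct and follows the same basic strategy as the paper: normalize the homogeneous equation by dividing through by one term so it becomes an inhomogeneous $S$-unit equation, verify that non-degeneracy is preserved and that the new variables uniquely determine the old ones, and then invoke the Amoroso--Viada count (Theorem~\ref{thm:av}) followed by a clean-up of the exponent. The one genuine (and welcome) difference is in how you handle the rank input to Theorem~\ref{thm:av}. The paper divides by $-b_m Y_m$, observes that the resulting tuple lies in the ambient group $H_1^n\times H_2^{m-1}$ of rank $2n+m-1$, and then assumes $n\le m$ without loss of generality so that $3n+2m-1\le \tfrac{5}{2}(n+m)-1$, which forces a symmetric case split. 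You instead compute the rank of the \emph{actual} solution group via the injectivity of the parametrizing homomorphism $\mathbb{Z}^{n+m}\to(\mathbb{Q}^*)^{n+m-1}$, obtaining the exact value $n+m$. Since $n+m\le 2n+m-1$ for $n\ge 1$, your intermediate exponent $8(n+m)(n+m-1)^4$ is never worse than the paper's $4(n+m-1)^4(3n+2m-1)$, your argument needs no case split, and the final polynomial estimate $4(2N+1)(N-1)^4\le 10N^5$ for $N=n+m\ge 2$ lands you at the same stated bound. Your reduction of non-degeneracy and the recovery of the original $X_i,Y_j$ from the $z_i$ via multiplicative independence mirror the paper's treatment and are correct.
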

\begin{proof} We consider the case when $n\le m$; the case when $m<n$ is handled similarly. Let $H_1:=\{k^a \ell^b : a,b \in \mathbb{Z}\}\cong (\mathbb{Z},+)^2$, which is an abelian group of rank $2$, and let $H_2:=\{\ell^b : b \in \mathbb{Z}\}\cong (\mathbb{Z},+)$, which has rank one.

A solution to the equation 
$$a_1X_1+\cdots +a_{n} X_n + b_{1} Y_{1}+\cdots + b_{m} Y_{m}=0$$ with the desired properties gives rise to a solution to the equation \begin{equation} \label{eq:Zi}
\sum_{i=1}^n (-a_i/b_m)Z_1 + \sum_{j=1}^{m-1} (-b_j/b_m) Z_{n+j}  = 1
\end{equation}
with $Z_i=X_i/Y_m\in H_1$ for $1 \le i\le n$ and $Z_i=Y_{i-n}/Y_m\in H_2$ for $n+1\le i<n+m$, and with Equation (\ref{eq:Zi}) non-degenerate. Then $(Z_1,\ldots ,Z_{n+m-1})\in \Gamma\subseteq (\mathbb{Q}^*)^{n+m-1}$, where 
$\Gamma = H_1^{n}\times H_2^{m-1}$, which is a group of rank $2n+m-1$.
Thus we can take $r=2n+m-1$ in Theorem \ref{thm:av}, and this gives that Equation (\ref{eq:Zi}) has at most 
\begin{equation} \label{eq:sol}
(8(n+m-1))^{4(n+m-1)^4 (3n+2m-1)}\end{equation} non-degenerate solutions. Since $n\le m$, we have $3n+2m \le 5(n+m)/2$, and so
$4(n+m-1)^4 (3n+2m-1)\le 10(n+m)^5 - 4(n+m-1)^4$.
Thus the quantity in Equation (\ref{eq:sol}) is bounded above by $$(8(n+m-1))^{10(n+m)^5-4(n+m-1)^4}.$$ 

Finally, we can uniquely recover the original $X_i$'s and $Y_i$'s from $Z_1,\ldots ,Z_{m+n-1}$. To see this, observe that for $i=1,\ldots ,n$ we must have $Z_i=k^a/\ell^b$ for some integers $a$ and $b$. Since $k$ and $\ell$ are multiplicatively independent, $a$ and $b$ are uniquely determined. So we can recover $X_1,\ldots ,X_n$ and $Y_m$ from $Z_1, \dots, Z_n$. But we can then recover $Y_1, \dots, Y_{m-1}$ from the remaining $Z_j$.  The result follows.

\end{proof}

We now use the preceding lemma to give estimates in the case where some degeneracy is allowed.

\begin{lemma} Let $k$ and $\ell$ be multiplicatively independent integers and let $m, n\ge 1$ be integers and let $a_1,\ldots, a_n, b_{1},\ldots ,b_{m}$ be nonzero rational numbers.  Then there are at most $$
2^{-(n+m)}\cdot \left( 8(n+m-1) \right)^{10(n+m)^5-(n+m)}$$ solutions to the equation
$$a_1X_1+\cdots +a_n X_n + b_{1} Y_{1}+\cdots + b_{m} Y_{m}=0$$ in which each $X_i$ is a power of $k$, each $Y_i$ is a power of $\ell$, and no non-trivial subsum of either
$a_1X_1+\cdots +a_n X_n$ or $b_{1} Y_{1}+\cdots + b_{m} Y_{m}$ vanishes.
\label{lem:upper}
\end{lemma}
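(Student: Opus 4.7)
The plan is to argue by strong induction on $N := n + m$, writing $g(N) := 2^{-N}(8(N-1))^{10N^5 - N}$ for the target upper bound. For the base cases $N \in \{2, 3\}$ I would observe that any proper non-empty vanishing subsum $S$ would be forced to be either purely among the $X$'s or purely among the $Y$'s (in the $N=3$ case, one of $S$ or its complement has size $1$, hence is a singleton, and for $N=2$ the only proper subsets are singletons). Since pure vanishing is excluded by hypothesis, the equation is automatically non-degenerate in the sense of Lemma \ref{lem:ABC}, whose bound comfortably fits under $g(N)$.

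For the inductive step with $N \ge 4$, I would split the solutions by whether any proper non-empty subsum vanishes. Let $M_A$ count the solutions with no such subsum and set $M_B := M - M_A$. Lemma \ref{lem:ABC} gives directly $M_A \le (8(N-1))^{10N^5 - 4(N-1)^4}$; combined with the elementary inequality $12(N-1)^4 \ge 4N+1$ (valid for $N \ge 2$) and the bound $(8(N-1))^x \ge 2^{3x}$, this yields $M_A \le g(N)/2$.

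For $M_B$ the key structural step is to associate to each solution its lexicographically smallest proper non-empty vanishing subset $S$ of indices. By the no-pure-vanishing hypothesis, $S$ must contain at least one $X$-index and one $Y$-index, as must $S^c$, so $2 \le |S| \le N-2$. The restrictions of $(X,Y)$ to $S$ and $S^c$ each give smaller sub-equations inheriting the no-pure-vanishing property (any pure vanishing subsum of a restriction is a pure vanishing subsum of the original), so by the inductive hypothesis each contributes at most $g(|S|)g(|S^c|)$. Bounding the number of valid $S$ by $2^N$ gives
\[
M_B \le 2^N \max_{2 \le N_0 \le N-2} g(N_0)\, g(N-N_0).
\]
The elementary inequality $N^5 - N_0^5 - (N-N_0)^5 \ge 10 N^3$ for $N \ge 4$ and $N_0 \in [2, N-2]$ (minimum attained at the endpoint $N_0 = 2$, verified by expansion), combined with the trivial estimate $(8(N-1))^{10N^3} \ge 2^{30N^3} \ge 2^{N+1}$, produces $M_B \le g(N)/2$. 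Adding, $M = M_A + M_B \le g(N)$, completing the induction.

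The main obstacle I anticipate is the canonical-choice bookkeeping, namely verifying that sending each Case B solution to its lex-smallest vanishing $S$ is an unambiguous assignment so that the outer sum over $S$ truly upper-bounds $M_B$ (this requires being careful that each solution contributes to exactly one summand rather than several). The remaining quantitative inequalities are routine and enjoy enormous slack, since the savings $4(N-1)^4$ and $10(N^5 - N_0^5 - (N-N_0)^5)$ grow polynomially while we only need to beat $2^{N+1}$.
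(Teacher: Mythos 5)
Your argument is correct, and it reaches the paper's bound by a genuinely different route.  The paper associates to each solution a single set partition of $V=\{X_1,\ldots,X_n,Y_1,\ldots,Y_m\}$ into minimal mixed vanishing blocks $U_1,\ldots,U_r$, applies Lemma~\ref{lem:ABC} to each block, multiplies the resulting bounds, and then multiplies by a count of possible partitions of the form $((n+m)/2)^{n+m}$ (obtained by embedding partitions with at most $(n+m)/2$ parts into surjections); the savings needed to absorb the partition count come from the observation $r\le(n+m)/2$, which shaves a factor $(8(n+m-1))^{4(n+m-r)}$ off the exponent.  You instead run a strong induction on $N=n+m$, peel off a single canonical mixed vanishing block $S$ at a time (the lexicographically smallest), apply the inductive hypothesis to $S$ and $S^c$ separately, and absorb the $2^N$ choices of $S$ via the superadditivity of $N\mapsto N^5$ (your inequality $N^5-N_0^5-(N-N_0)^5\ge 10N^3$, which I checked holds for $N\ge 4$, $2\le N_0\le N-2$).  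Both arguments ultimately rest on Lemma~\ref{lem:ABC} for the non-degenerate case and on the fact that every vanishing block must be mixed (hence of size $\ge 2$).  The paper's one-shot version avoids the need for a carefully tuned inductive inequality, while your recursive version avoids enumerating general set partitions and replaces that count by the simpler $2^N$; the quantitative slack is large enough that both land on the same final expression.  One tiny transcription slip: the exponent gap you obtain is $10(N^5-N_0^5-(N-N_0)^5)\ge 100N^3$ rather than $10N^3$, but since you only need to dominate $2^{N+1}$ this only makes your estimate more comfortable, not less.
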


\begin{proof}
For each solution to
$$a_1X_1+\cdots +a_n X_n + b_1 Y_1+\cdots + b_m Y_m=0$$ such that no subsum of either $a_1X_1+\cdots +a_n X_n$ or $b_1 Y_1+\cdots + b_m Y_m$ vanishes, we can associate a set partition 
$\pi$ of the set $V:=\{X_1,\ldots, X_n, Y_1,\ldots ,Y_m\}$ into disjoint non-empty subsets $U_1,\ldots ,U_r$ such that the subsum corresponding to the variables in each $U_i$ vanishes and no proper subsum vanishes. Let $c_i := |U_i|$. Then $U_i$ intersects both $\{X_1,\ldots ,X_n\}$ and $\{Y_1,\ldots ,Y_m\}$ non-trivially and so by Lemma \ref{lem:ABC}, for $i=1,\ldots ,r$, there are at most $\left(8(c_i-1)\right)^{10c_i^5-4(c_i-1)^4}$ non-degenerate solutions to the subsum 
$$\sum_{X_j\in U_i} a_j X_j + \sum_{Y_j\in U_i} b_j Y_j = 0$$ with each $X_j$ a power of $k$ and each $Y_j$ a power of $\ell$.   Thus for the set partition $\pi$ we have at most 
\begin{align*}
\prod_{i=1}^r  \left(8(c_i-1)\right)^{10c_i^5-4(c_i-1)^4} &\le \left(8(n+m-1)\right)^{\sum_{i=1}^r (10c_i^5 - 4(c_i-1)^4)} \\
&\le \left(8(n+m-1)\right)^{10\left(\sum_{i=1}^r c_i\right)^5 - 4\sum_{i=1}^r (c_i-1)}\\
&=  \left(8(n+m-1)\right)^{10(n+m)^5 - 4(n+m-r)}\\
&\le \left(8(n+m-1)\right)^{10(n+m)^5 - 2(n+m)}
\end{align*}
solutions, where the last step follows from the fact that
\begin{equation}\label{eq:r}
r\le (n+m)/2,\end{equation}
which is a consequence of the fact that each $U_i$ intersects both $\{X_1,\ldots , X_n\}$ and $\{Y_1,\ldots ,Y_m\}$ non-trivially. 

Finally, observe that collection of set partitions of a finite set $W$ having exactly $e$ parts embeds in the collection of surjective maps from $W$ to $\{1,\ldots ,e\}$, by first assigning the labels $1,\ldots ,e$ to the sets making up a set partition and then associating the map which sends $w\in W$ to the label of the set it is in.  Since the number of parts in our set partitions is bounded by $(n+m)/2$, we see that the number of possible set partitions we have to consider is at most $\left( (n+m)/2 \right)^{n+m}$, since it embeds in the set of maps from $V$ into $\{1,\ldots ,\lfloor (n+m)/2\rfloor\}$.
Since we get at most 
$$\left(8(n+m-1)\right)^{ 10(n+m)^5-2(n+m)}$$ solutions of the desired form corresponding to each associated set partition of $V$, and since there are at most $((n+m)/2)^{n+m}$ possible set partitions that can occur, we get an upper bound of
\begin{align*} &~ ((n+m)/2)^{n+m} \cdot \left(8(n+m-1)\right)^{ 10(n+m)^5-2(n+m)} \\
&\le 2^{-(n+m)}\cdot \left(8(n+m-1)\right)^{n+m} (8(n+m-1))^{10(n+m)^5-2(n+m)}\\
&=  2^{-(n+m)}\cdot  (8(n+m-1))^{10(n+m)^5-(n+m)}.
\end{align*} The result follows.
\end{proof}

We now use these estimates to obtain upper bounds on the size of an intersection of sparse subsets of $\mathbb{N}$.

\begin{prop} \label{sets} Let $k$ and $\ell$ be multiplicatively independent positive integers, let $s\ge 1, t\ge 1$, and let $v_0,\ldots ,v_{s},w_1,\ldots ,w_s\in \Sigma_k^*$ and $u_0,\ldots ,u_{t}, y_1,\ldots ,y_t\in \Sigma_{\ell}^*$.  
If 
$$X=\{[v_0 w_1^* v_1 w_2^* \cdots v_{s-1} w_s^* v_{s}]_k\}$$ and 
$$Y=\{[u_0 y_1^* u_1 y_2^* \cdots u_{t-1} y_t^* u_{t}]_{\ell}\},$$ 
 then $$\left| X\cap Y\right| \le  \left(8(s+t+1)\right)^{10(s+t+2)^5-(s+t+2)}.$$
\end{prop}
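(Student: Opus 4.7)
The plan is to reduce the problem to counting solutions of an $S$-unit-type equation in powers of $k$ and $\ell$, and then to apply a set-partition argument similar to that of Lemma \ref{lem:upper}, modified to handle ``pure'' (single-base) vanishing subsums that the lemma excludes.

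First, I would invoke Proposition \ref{rem:sparse}: after discarding zero-coefficient summands (which only decreases $s$ and $t$ and hence cannot weaken the bound), we have
\[
X = \{c_0 + c_1 X_1 + \cdots + c_s X_s\}, \qquad Y = \{d_0 + d_1 Y_1 + \cdots + d_t Y_t\},
\]
with nonzero rational coefficients, each $X_i$ a power of $k$, and each $Y_j$ a power of $\ell$. Every element of $X \cap Y$ arises from at least one tuple $(X_1,\ldots,X_s,Y_1,\ldots,Y_t)\in\langle k\rangle^s\times\langle \ell\rangle^t$ satisfying
\begin{equation*}
c_0 + c_1 X_1 + \cdots + c_s X_s - d_0 - d_1 Y_1 - \cdots - d_t Y_t = 0,
\end{equation*}
so, setting $N := s+t+2$ and thinking of the two constants as $c_0\cdot k^0$ and $-d_0\cdot\ell^0$, it suffices to bound the number of such tuples.

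Next, to each tuple I would associate the finest set partition $U_1,\ldots,U_r$ of the $N$ summands into minimal vanishing blocks, mirroring the proof of Lemma \ref{lem:upper}. Since every block has size $c_i \ge 2$, we have $r \le N/2$. A mixed block of size $c$ (containing both $k$- and $\ell$-terms) contributes at most $(8(c-1))^{10c^5-4(c-1)^4}$ solutions by Lemma \ref{lem:ABC}. For a pure block of size $c$ (entirely $k$-terms or entirely $\ell$-terms) I would apply Theorem \ref{thm:av} directly: after dividing by one term, the resulting tuple of ratios lies in the rank-$(c-1)$ group $\langle k\rangle^{c-1}$ (or $\langle \ell\rangle^{c-1}$), yielding at most $(8(c-1))^{4(c-1)^4(2c-1)}$ non-degenerate solutions, and the elementary inequality $(c-1)^4 \le (5/4)c^4$ shows this is bounded by $(8(c-1))^{10c^5-4(c-1)^4}$ as well, so the same per-block estimate applies uniformly.

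Finally, I would multiply the per-block bounds over a fixed partition, using $\sum_i c_i^5 \le N^5$ and $\sum_i (c_i-1)^4 \ge \sum_i(c_i-1) = N-r \ge N/2$ to obtain $(8(N-1))^{10N^5-2N}$ per partition, and then sum over the (at most $N^N \le (8(N-1))^N$) set partitions of an $N$-element set, giving the claimed bound $(8(s+t+1))^{10(s+t+2)^5-(s+t+2)}$. The main technical subtlety is the pure-block case: the proof of Lemma \ref{lem:upper} relied on the stronger inequality $r \le (n+m)/2$, which is guaranteed only when every block is mixed, and without that strengthening we lose the factor $2^{-(s+t+2)}$ present there, which accounts precisely for the gap between the bound in Lemma \ref{lem:upper} and the one we prove here.
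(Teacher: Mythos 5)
Your route is genuinely different from the paper's: the paper first removes maximal vanishing subsums of the pure-$k$ and pure-$\ell$ parts (paying a factor of at most $2^{s+t+2}$) and then applies Lemma~\ref{lem:upper} as a black box, while you extend the set-partition argument of Lemma~\ref{lem:upper} to handle pure blocks directly. However, there is a genuine gap in your treatment of pure blocks. For a pure $k$-block $\sum_{j\in U}a_jX_j=0$ with $|U|=c$, the set of non-degenerate solutions in powers of $k$ is \emph{infinite}: if $(X_j)_{j\in U}$ is a solution then so is $(k^m X_j)_{j\in U}$ for every $m$. Theorem~\ref{thm:av}, applied after dividing by one term, bounds only the ratio tuples $(X_j/X_{j_0})$; it does not bound the tuples $(X_j)$ themselves. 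This recovery step is precisely what makes Lemma~\ref{lem:ABC} work for \emph{mixed} blocks, where each ratio is of the form $k^a\ell^b$ and multiplicative independence pins down both exponents --- the paper spells this out at the end of the proof of Lemma~\ref{lem:ABC}. For a pure block the analogous recovery fails. Consequently your opening reduction ``it suffices to bound the number of such tuples'' is false whenever a pure vanishing subsum can occur: the tuple count is infinite even though $|X\cap Y|$ is finite.

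The missing observation is the one the paper makes explicitly before invoking Lemma~\ref{lem:upper}: since we only care about the common value $A$, and each pure block sums to zero, pure blocks contribute nothing to $A$ and can simply be discarded. Once that is noted, your final numerical bound does hold, because the ``contribution count'' of a pure block is really just $1$ (the value $0$), which is trivially at most your overestimate $(8(c-1))^{4(c-1)^4(2c-1)}$; so the bookkeeping survives even though the justification as written does not. I would also flag that your closing sentence misdiagnoses the subtlety: the difficulty with pure blocks is not the loss of the $2^{-(n+m)}$ factor (your $r\le N/2$ bound, valid since every block has size at least two, recovers the same exponent savings of $2N$), but rather the non-recoverability of absolute tuples from ratios, which is a qualitatively different issue that the paper handles by the ``remove-then-count'' step you omit.
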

\begin{proof}
By Proposition \ref{rem:sparse} we have that $X$ is of the form
 $$\left\{c_0 + c_1 k^{\delta_s n_s} + c_2 k^{\delta_s n_s + \delta_{s-1} n_{s-1}}+\cdots + c_s k^{\delta_s n_s+\cdots +\delta_1 n_1}  \colon n_1,\ldots ,n_s\ge 0\right\},$$ where $c_0,\ldots ,c_s$ are rational numbers. 
 Similarly, $Y$ is of the form
  $$\left\{ d_0 + d_1 \ell^{\delta_t' m_t} + d_2 \ell^{\delta_t' m_s + \delta_{t-1}' m_{t-1}}+\cdots + d_t \ell^{\delta_t'm_t+\cdots +\delta_1' m_1}\colon m_1,\ldots, m_t\ge 0\right\},$$ where $d_0,\ldots, d_t$ are rational numbers.
  
Then an element in $X\cap Y$ corresponds to a solution to the equation 
$$d_0 X_0+\cdots +d_t X_t -c_0 Y_0 - \cdots -c_s Y_s=0,$$
where $$X_0=1, X_1= \ell^{\delta_t' m_t},\ldots , X_t= \ell^{\delta_t' m_t+\cdots +\delta_1' m_1}$$ and
$$Y_{0}=1, \ldots , Y_{s}= k^{\delta_s n_s+\cdots +\delta_1 n_1},$$ with the corresponding element in the intersection given by $$A:=d_0 X_0+\cdots + d_t X_t = c_0 Y_0 + \cdots + c_s Y_s.$$

Since we are only concerned about the quantity $A$ in determining $X\cap Y$, after removing a maximal vanishing subsum\footnote{For nonzero $A$, this will be necessarily a proper subset, but when $A=0$ this will be the entire set.  The estimates we give account for this possibility.} we may assume that no non-trivial subsum of the terms involving powers of $\ell$ vanishes and that there are at most $t+1$ such terms. Similarly, we may remove a maximal vanishing subsum from $X_{t+1}+\cdots +X_{t+s+1}$. 

By Lemma \ref{lem:upper}, taking $n$ to be the number of terms from our first sum, we have $n\le t+1$; similarly, we can take $m$ to be the number of terms from our second subsum and we have $m\le s+1$. Using the fact that there are at most $2^{s+1}\cdot 2^{t+1}$ possible pairs of maximal vanishing subsums that we can remove and the fact that the function $$F(a,b)=2^{-a-b-2}\left(8(a+b+1)\right)^{10(a+b+2)^5-(a+b+2)}$$ is increasing in both $a$ and $b$ for $a,b\ge 0$, we then see there are at most
$$2^{s+t+2}\cdot 2^{-(s+t+2)}\left(8(s+t+1)\right)^{10(s+t+2)^5-(s+t+2)}$$ elements in $X\cap Y$.  The result follows.

\end{proof}

We are now ready to prove Theorem \ref{thm:main2}.
\begin{proof}[Proof of Theorem \ref{thm:main2}]
By Proposition \ref{prop:sets} $X$ is a union of sets $W_1,\ldots ,W_{A_1}$ of the form 
\begin{equation}
\label{eq:form1}
\{[v_0 w_1^* v_1 w_2^* \cdots v_{s-1} w_s^* v_{s}]_k\}
\end{equation}
 with $w_1,\ldots ,w_s, v_0,\ldots , v_{s}$ words in $(\Sigma_k^d)^*$ in which the $w_i$ are non-empty but the $v_i$ may be empty and with $|w_1|+\cdots +|w_s|\le |Q|-1$ and $|v_0|+\cdots +|v_{s}|\le k^d(|Q|-1)$.  Moreover, since our input alphabet has size $k^d$, Proposition \ref{prop:sets} also says we can take
$$A_1\le (|Q|-1)!(k^{d(|Q|-1)}+k^{d(|Q|-2)}+\cdots +1).$$ Similarly,
$Y$ is the union of sets $Z_{1},\ldots, Z_{A_2}$ with $$A_2\le (|Q'|-1)!(\ell^{d(|Q'|-1)}+\ell^{d(|Q'|-2)}+\cdots +1)$$ and each $Z_j$ of the form
\begin{equation}
\label{eq:form2}
\{[u_0 y_1^* u_1 y_2^* \cdots u_{t-1} y_t^* u_{t+1}]_{\ell}\}
\end{equation}
 with $u_0,\ldots ,u_{t+1}, y_1,\ldots , y_{t}$ words in $(\Sigma_{\ell}^d)^*$ in which the $y_i$ are non-empty but the $u_i$ may be empty and with $|y_1|+\cdots +|y_t|\le |Q'|-1$ and $|u_0|+\cdots +|u_{t}|\le \ell (|Q'|-1)$.  In particular, $t\le |Q'|-1$ for each such set.
 
Now for $(p,q)\in \{1,\ldots ,A_1\}\times \{1,\ldots ,A_2\}$ and each $i=1,\ldots ,d$, we let $W_{p,i}\subseteq \mathbb{N}$ and $Z_{q,i}\subseteq \mathbb{N}$ be respectively the images of $W_p$ and $Z_q$ under the projection map from $\mathbb{N}^d$ onto its $i$-th coordinate.  For $i\in \{1,\ldots ,d\}$ we have 
$$\proj_i(v_1 w_1^* v_2 w_2^* \cdots v_s w_s^* v_{s+1})= \proj_i(v_1) \proj_i(w_1)^* \cdots \proj_i(w_s)^*\proj_i(v_{s+1}),$$ where ${\rm proj}_i$ is the projection map $(\Sigma_k^d)^*\to \Sigma_k^*$ obtained by taking the $i$-th coordinate. It follows that each $W_{p,i}$ is a set of the form given in Equation (\ref{eq:form1}) but where we now use words over $\Sigma_k$ instead of $(\Sigma_k)^d$.  Similarly, each $Z_{q,i}$ is a set of the form given in Equation (\ref{eq:form2}), but where we now use words over $\Sigma_{\ell}$ instead. 


By Proposition \ref{sets}, each $W_{p,i}\cap Z_{q,i}$ has cardinality at most 
$$\left(8(s+t+1)\right)^{10(s+t+2)^5-(s+t+2)}$$ in the intersection.  In particular, since $s\le |Q|-1$ and $t\le |Q'|-1$, 
we have
$$|W_{p,i}\cap Z_{q,i}| \le \left(8(|Q|+|Q'|-1)\right)^{10(|Q|+|Q'|)^5-(|Q|+|Q'|)}.$$
Now since $$W_p\cap Z_q \subseteq (W_{p,1}\cap Z_{q,1})\times \cdots  \times (W_{p,d}\cap Z_{q,d}),$$ we then see each intersection 
$W_p\cap Z_q$ has size at most
$$ \left(8(|Q|+|Q'|-1)\right)^{10d(|Q|+|Q'|)^5-d(|Q|+|Q'|)}.$$
Finally, since
$$X\cap Y= \bigcup_{p\le A_1}\bigcup_{q\le A_2} (W_p\cap Z_q),$$ we see that 
$$|X\cap Y| \le A_1\cdot A_2 \cdot \left(8(|Q|+|Q'|-1)\right)^{10d(|Q|+|Q'|)^5-d(|Q|+|Q'|)}.$$
Finally, observe that 
$A_1\le |Q|^{|Q|} \cdot k^{d|Q|}$ and $A_2\le |Q'|^{|Q'|} \cdot \ell^{d|Q'|}$, so we get 
$$|X\cap Y| \le k^{d|Q|} \cdot \ell^{d|Q'|} \cdot |Q|^{|Q|} \cdot |Q'|^{|Q'|} \cdot \left(8(|Q|+|Q'|-1)\right)^{10d(|Q|+|Q'|)^5-d(|Q|+|Q'|)},$$ which is easily seen to be less than
$$ k^{d|Q|} \cdot \ell^{d|Q'|} \cdot \left(8(|Q|+|Q'|-1)\right)^{10d(|Q|+|Q'|)^5}.$$
The result follows.
\end{proof}

\begin{remark} We note that the strategy employed in the proof of Theorem \ref{thm:main2} involves giving a description of the complexity of the sublanguages of $(\Sigma_k^d)^*$ and $(\Sigma_{\ell}^d)^*$ accepted by our automata, then using this to bound the complexity of their projections, and finally using $S$-unit theory to get a bound on the sizes of the projections.  An alternative approach would be to first find the automata that accept the projections of the languages and work with those bounds.  The projection of a regular language accepted by an automaton with $n$ states can be accepted by an automaton with $2^n$ states.\footnote{There are improvements to this bound (see, for example, \cite{J} and references therein), but in general the number of states required to accept a projected language is exponential in the number of states of the minimal automaton accepting the original language.}  If one uses this approach one gets an alternative bound that is typically much worse.
\end{remark}
\section{A general intersection question} \label{conjecture}

We now consider the general question of what the intersection of a sparse automatic set with a zero-density automatic set can look like.  We recall that for a subset 
$S$ of $\mathbb{N}$, the \emph{density} of $S$ is just the limit
\begin{equation}
\lim_{n \to \infty} \frac{\pi_S(n)}{n},
\end{equation}
if it exists.  In general, a set of natural numbers always has a lower density and an upper density
given respectively by
\begin{equation}
\liminf_{n \to \infty}  \frac{\pi_S(n)}{n} \text{ and } \limsup_{n \to \infty}  \frac{\pi_S(n)}{n},
\end{equation} and so the density exists precisely when these two values coincide.

We make the remark that since sparse automatic sets are polylogarithmically bounded, they necessarily have density zero. 

The following result is due to the second-named author
\cite[Prop. 2.1]{bellaut}.

\begin{prop} Let $k\ge 2$ be a natural number, let $h:\mathbb{N}\to \mathbb{Q}_{\ge 0}$ be a $k$-automatic sequence, and let $s(n)=\sum_{j<n} h(j)$.  Then there exist $\beta\in (0,k)$, $C>0$, $a\ge 1$, and nonnegative rational numbers $c_{j}$ for $j\in  \{0,1,\ldots ,a-1\}$ such that 
$$|s(k^{an+j}) - c_{j} k^{an+j}|<C \beta^{an}$$ for every $n\ge 0$.  Moreover, $a$ and the rational numbers $c_{0},\ldots , c_{a-1}$ are recursively computable and $\beta$ can be effectively determined. 
\label{thm:perron}
\end{prop}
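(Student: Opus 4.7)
The plan is to realize $s(k^R)$ as a matrix power sandwiched between two vectors, and then extract the dominant term by combining Perron-Frobenius with a Jordan decomposition.  Represent $h$ by a deterministic finite automaton with output $(Q,\Sigma_k,\delta,q_0,\tau)$ where $\tau\colon Q\to\Q_{\ge 0}$; for each $x\in\Sigma_k$ let $M_x$ be the $|Q|\times|Q|$ nonnegative integer matrix with $(M_x)_{q,q'}=1$ iff $\delta(q,x)=q'$, and set $A:=M_0+M_1+\cdots+M_{k-1}$, so every row of $A$ sums to $k$.  Writing $u$ for the row indicator vector of $q_0$ and $\mathbf{v}$ for the column vector $(\tau(q))_{q\in Q}$, the identity $h([w]_k)=u\,M_{w_0}M_{w_1}\cdots M_{w_{R-1}}\mathbf{v}$ (valid for every $w\in\Sigma_k^R$ under the paper's right-to-left reading convention, provided the DFAO is arranged so that leading zeros do not change the state) summed over all $w\in\Sigma_k^R$ yields
\begin{equation*}
s(k^R) \;=\; u\,A^R\,\mathbf{v} \qquad \text{for every } R\ge 0.
\end{equation*}

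Because $A\mathbf{1}=k\mathbf{1}$, the spectral radius of $A$ is $k$, and a standard Perron-Frobenius analysis of the Frobenius normal form of $A$ shows that every eigenvalue of $A$ of modulus $k$ has the form $k\zeta$ with $\zeta$ a root of unity.  Let $a\ge 1$ be the least common multiple of the orders of these $\zeta$'s, so that $k^a$ is the unique eigenvalue of $A^a$ of modulus $k^a$.  The a priori bound $s(k^R)\le k^R\max h$ further forces each Jordan block of $A^a$ at $k^a$ that is visible to the pair $(u,\mathbf{v})$ to have size one, as a block of size greater than one would contribute a term of growth $R^ik^R$ with $i\ge 1$ to $u A^R\mathbf{v}$.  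Hence the spectral projector $P$ onto the $k^a$-eigenspace of $A^a$ (which, on the subspace seen by $(u,\mathbf{v})$, coincides with the generalized eigenspace) lies in $\Q[A^a]$---it is obtained by a Bezout relation in $\Q[x]$ applied to the factorization $(x-k^a)\cdot g(x)$ of the relevant part of the minimal polynomial---and on $\ker P$ the restriction of $A^a$ has spectral radius $\beta_1<k^a$.

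Using $P^2=P$ and $A^aP=PA^a=k^aP$ one obtains the clean splitting $(A^a)^n=k^{an}P+(A^a)^n(\mathrm{I}-P)$, with $\|(A^a)^n(\mathrm{I}-P)\|=O(\beta_2^n)$ for any $\beta_2>\beta_1$.  Substituting into $s(k^{an+j})=u\,A^j(A^a)^n\mathbf{v}$ yields
\begin{equation*}
s(k^{an+j}) \;=\; k^{an+j}\cdot c_j \;+\; u\,A^j(A^a)^n(\mathrm{I}-P)\mathbf{v},\qquad c_j := \frac{u\,A^jP\,\mathbf{v}}{k^j}\in\Q,
\end{equation*}
and choosing any $\beta\in(\beta_2^{1/a},k)$ absorbs the error into $O(\beta^{an})$, uniformly in $j\in\{0,\ldots,a-1\}$.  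Non-negativity $c_j\ge 0$ follows at once from $c_j=\lim_{n\to\infty} s(k^{an+j})/k^{an+j}\ge 0$.  All of $A$, $u$, $\mathbf{v}$, the characteristic polynomial of $A$ and its complex roots (to arbitrary precision), the integer $a$, the projector $P\in\Q[A^a]$, and a valid $\beta$ are recursively computable by standard effective linear algebra over $\Q$, which yields the asserted recursive computability of $a$ and the $c_j$ and the effective determinability of $\beta$.

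The main technical obstacle is the combined Perron-Frobenius/semisimplicity step in the (generically reducible) case: one must verify both that every modulus-$k$ eigenvalue of $A$ is $k$ times a root of unity, and that the Jordan blocks at these eigenvalues, viewed on the image of the pair $(u,\mathbf{v})$, are of size exactly one.  The first claim is classical via the Frobenius normal form, while the second is forced by the uniform bound $u A^R\mathbf{v}=O(k^R)$ coming from the boundedness of $h$.  Once these two facts are secured, the Jordan decomposition is entirely routine.
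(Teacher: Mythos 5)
The paper cites this result to \cite{bellaut} without giving a proof; your transfer-matrix/Perron--Frobenius approach is the standard route and almost certainly matches the cited source. There is, however, a gap in the way you construct the projector $P$. Your visibility argument (from $s(k^R)=O(k^R)$) only shows that Jordan blocks of $A^a$ at $k^a$ that are \emph{seen by the pair} $(u,\mathbf{v})$ have size one, but you then build $P$ by Bezout from a factorization $(x-k^a)\cdot g(x)$ of ``the relevant part of the minimal polynomial'' and conclude $P\in\mathbb{Q}[A^a]$. That step needs $(x-k^a)$ to be a simple factor of the \emph{actual} minimal polynomial of $A^a$, i.e.\ genuine semisimplicity of the eigenvalue $k^a$; visible semisimplicity is strictly weaker. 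If $A^a$ had a Jordan block of size $\ge 2$ at $k^a$ lying, say, in the kernel of $u$ or outside the cyclic subspace generated by $\mathbf{v}$, then no polynomial in $A^a$ equals the eigenprojector onto the $k^a$-eigenspace, and ``the relevant part of the minimal polynomial'' is not a well-defined polynomial from which to run Bezout.

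Fortunately the fix is cleaner than what you wrote and shortens the argument. Since every row of $A$ sums to $k$, the matrix $A/k$ is row-stochastic, hence so is $(A/k)^n$ for every $n$, and its $\ell^\infty$-operator norm (the maximal row sum) equals $1$ for every $n$. Bounded powers force every modulus-$1$ eigenvalue of $A/k$ --- equivalently every modulus-$k$ eigenvalue of $A$, and in particular the eigenvalue $k^a$ of $A^a$ --- to be semisimple outright. With that in hand, $(x-k^a)$ divides the minimal polynomial of $A^a$ exactly once, the Bezout construction gives $P\in\mathbb{Q}[A^a]$ with no caveats, and the rest of your proof (the identity $s(k^R)=uA^R\mathbf{v}$, the splitting $(A^a)^n=k^{an}P+(A^a)^n(\mathrm{I}-P)$, the spectral-gap estimate on $\ker P$, rationality and nonnegativity of the $c_j$, and the effectivity claims) goes through exactly as you wrote it. The paragraph about Jordan blocks ``visible to $(u,\mathbf{v})$'' should simply be replaced by this row-stochasticity observation. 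One smaller point worth tightening: with the right-to-left (least-significant-digit first) reading convention, the normalization you need is that appending a digit $0$ at the end of the input word does not change the output, i.e.\ $\tau(\delta(q,0))=\tau(q)$ for every reachable state $q$; this can always be arranged (it is forced on the minimal DFAO), and it is what makes the sum over $w\in\Sigma_k^R$ of $h([w]_k)$ equal to $s(k^R)$.
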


As a consequence of this, we can prove that either a $k$-automatic set $S$ has positive lower density (i.e., $\liminf \pi_S(x)/x > 0$) or there is some positive $\epsilon>0$ such that
$\pi_S(x)={O}(x^{1-\epsilon})$.

\begin{prop} Let $k\ge 2$ be a natural number and let $S$ be a $k$-automatic subset of the natural numbers. Then either $S$ has positive lower density or there is some $\epsilon>0$ such that
$\pi_S(x)={O}(x^{1-\epsilon}).$
\end{prop}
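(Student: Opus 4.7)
The plan is to apply Proposition~\ref{thm:perron} to the indicator function $h=\mathbf{1}_S$, which is a $k$-automatic sequence with nonnegative rational values since $S$ is $k$-automatic. The associated partial sum $s(n)=\sum_{j<n}h(j)$ differs from $\pi_S(n)$ by at most one, so the two have the same asymptotic order. Proposition~\ref{thm:perron} yields constants $\beta\in(0,k)$, $C>0$, a period $a\ge 1$, and nonnegative rationals $c_0,\ldots,c_{a-1}$ satisfying $|s(k^{an+j})-c_j k^{an+j}|<C\beta^{an}$ for every $n\ge 0$ and every $0\le j<a$.

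The heart of the argument is a dichotomy: either every $c_j$ is zero, or every $c_j$ is strictly positive. Since $h\ge 0$, the sequence $s$ is non-decreasing, so $s(k^{an+j+1})\ge s(k^{an+j})$. Substituting the approximations from Proposition~\ref{thm:perron} and dividing through by $k^{an+j}$ yields $c_j\le kc_{j+1}+O((\beta/k)^{an})$, which in the limit gives $c_j\le kc_{j+1}$; hence $c_{j+1}=0$ forces $c_j=0$. The analogous comparison at the period boundary, between $k^{an+a-1}$ and $k^{a(n+1)}$, yields $c_{a-1}\le kc_0$. Iterating these inequalities cyclically shows that the vanishing of any single $c_j$ forces the vanishing of them all, which establishes the dichotomy.

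In the positive case, set $c_*=\min_j c_j>0$. For large $x$, choose $m$ with $k^m\le x<k^{m+1}$ and write $m=an+j$ with $0\le j<a$; monotonicity of $s$ then gives $\pi_S(x)\ge s(k^m)\ge c_j k^m-C\beta^{an}\ge (c_*/k)\,x-C\beta^{an}$, and since $\beta^{an}/x\le (\beta/k)^{an}\to 0$ we conclude $\liminf_{x\to\infty}\pi_S(x)/x\ge c_*/k>0$, so $S$ has positive lower density. In the vanishing case, given large $x$ pick the smallest landmark power $k^{an'+j'}$ strictly exceeding $x$; then $\pi_S(x)\le s(k^{an'+j'})\le C\beta^{an'}$ and $k^{an'}\le k^{a+1}x$, so with $\epsilon=1-\log_k\beta>0$ (which is positive because $\beta<k$) a short calculation gives $\beta^{an'}=O(x^{1-\epsilon})$, hence $\pi_S(x)=O(x^{1-\epsilon})$.

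The one substantive step is the dichotomy itself; once it is in hand, both cases reduce to standard sandwiching arguments between consecutive landmark powers $k^{an+j}$. I expect the only real care to be needed at the period boundary and in keeping the error bookkeeping uniform in $j$, but no substantive new input beyond monotonicity of $s$ and the approximation supplied by Proposition~\ref{thm:perron} should be required.
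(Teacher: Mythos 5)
Your proof is correct and follows the same overall strategy as the paper: apply Proposition~\ref{thm:perron} to the indicator function and split into cases. The one place you go further than the paper is in explicitly establishing the dichotomy that the $c_j$ are either all zero or all strictly positive. The paper simply asserts that ``either $S$ has positive lower density or $\pi_S(k^n)=O(\beta^n)$'' as an immediate consequence of Proposition~\ref{thm:perron}, but that assertion tacitly rules out the mixed case (some $c_j>0$, some $c_{j'}=0$), in which neither alternative would hold; your monotonicity argument $c_0\le kc_1\le\cdots\le k^{a-1}c_{a-1}\le k^a c_0$ is exactly what is needed to close that gap. One microscopic point worth a line: if $\beta<1$ and all $c_j=0$, the integer-valued non-decreasing function $s$ must be identically zero, so $S=\emptyset$ and the conclusion is trivial; otherwise your formula $\epsilon=1-\log_k\beta$ could exceed $1$, making $x^{1-\epsilon}\to 0$ while $\pi_S$ is merely bounded. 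Aside from that, the bookkeeping between landmark powers $k^{an+j}$ is sound.
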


\begin{proof}
Taking $h:\mathbb{N}\to \{0,1\}$ to be the characteristic function of $S$ and then applying Proposition \ref{thm:perron}, we see that either $S$ has positive lower density or $\pi_S(k^n) = {O}(\beta^n)$ for some $\beta\in (0,k)$. We henceforth assume that we are in the second case.  Then there is some $\epsilon>0$ such that 
$$\pi_S(k^n) = {O}(k^{(1-\epsilon)n}).$$ 
Then for a given $x>1$, we have $k^n\le x<k^{n+1}$ for some $n$ and so
$$\pi_S(x) \le \pi_S(k^{n+1}) = {O}((k^{n+1})^{1-\epsilon}).$$ Since $kx\ge k^{n+1}$ we then see 
$$\pi_S(x)={O}(x^{1-\epsilon}),$$ and so we obtain the desired result.
\end{proof}
In general, if $k$ and $\ell$ are multiplicatively independent, then a sparse $k$-automatic set can have infinite intersection with an $\ell$-automatic set, but in the case when $X$ is a sparse $k$-automatic set and $Y$ is an $\ell$-automatic set of zero density, we expect $X\cap Y$ to be finite.  Heuristically, one can see why this should be the case as follows. Since $Y$ has zero density, we have shown that there is some $\epsilon>0$ such that
$\pi_Y(x)=O(x^{1-\epsilon})$, and since $X$ is sparse there are positive constants $c$ and $d$ such that $\pi_X(x)\le c(\log\, x)^d$ for $x$ large. 
Thus there is some $C>0$ such that, for $x$ large, if we take a natural number in $[0,x]$, the probability that it lies in $Y$ is at most $Cx^{-\epsilon}$.
In particular, if $i_1<i_2<i_3<\cdots $ is an enumeration of the elements of our sparse $k$-automatic set $S$, then since the bases $k$ and $\ell$ are multiplicatively independent, we expect that the probability that $i_j$ is in $Y$ to be at most $Ci_j^{-\epsilon}$, and so the expected number of elements in $X\cap Y$ should be bounded by the size of the 
sum
$$\sum_{j\ge 1} \frac{C}{i_j^{\epsilon}}.$$
Notice that the above series converges when $X$ is sparse. To see this, recall that $\pi_X(x)\le c(\log\, x)^d$ for some $c,d>0$ and for $x$ large.  Since
$\pi_X(i_N)=N$, we then have
$N\le c(\log\, i_N)^d$ for $N$ large, which gives $i_N \ge \exp((N/c)^{1/d})$ for $N$ sufficiently large. In particular, $i_N$ grows faster than any polynomial in $N$ and so for every $\epsilon>0$ we have
that $\sum 1/i_j^{\epsilon}$ converges. 
 
Using this heuristic as a guide, we make the following conjecture, although this problem appears to be well beyond what current methods in number theory can handle.

\begin{conj}\label{conj:gen}
Let $k, \ell$ be multiplicatively independent positive integers. If $X$ is a sparse $k$-automatic subset of $\mathbb{N}$ and $Y$ is a zero-density $\ell$-automatic subset of $\mathbb{N}$, then $X \cap Y$ is finite. 
\end{conj}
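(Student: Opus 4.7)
The plan is to turn the probabilistic heuristic preceding the conjecture into an actual finiteness argument, proceeding in three stages. First, one would reduce $X$ using Proposition \ref{rem:sparse} to a finite union of primitive branches of the form $\{c_0+c_1 k^{\delta_s n_s}+\cdots +c_s k^{\delta_s n_s+\cdots +\delta_1 n_1}\colon n_i\ge 0\}$, so that it suffices to work with a single such branch with fixed rational constants $c_i$ and positive integers $\delta_i$. On the $Y$ side, one would apply the density dichotomy proved in \S\ref{conjecture} to fix an $\epsilon>0$ with $\pi_Y(x)=O(x^{1-\epsilon})$, and then further analyse $Y$'s minimal $\ell$-automaton, splitting states according to strongly connected components and peeling off any sparse contributions (to which Theorem \ref{thm:main2} already applies) so that the remaining pieces of $Y$ have a controlled digit-type structure in base~$\ell$.

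Second, assuming for contradiction that $X\cap Y$ is infinite, one would extract from $X\cap Y$ an infinite sequence of parameter tuples $(n_1^{(j)},\ldots ,n_s^{(j)})$ and, after passing to a subsequence, arrange that the exponents fall into a single asymptotic regime---for instance, that a distinguished index grows strictly faster than the others, so that the corresponding elements of $X$ have the form $c\cdot k^{N_j}(1+o(1))$ with $N_j\to\infty$ and $c\in\mathbb{Q}_{>0}$ determined by the branch. This reduces the problem to showing that a one-parameter family of numbers whose base-$k$ expansions are essentially rigid can only land in the zero-density $\ell$-automatic set $Y$ finitely often.

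Third, and this is the heart of the matter, one needs a Diophantine statement asserting that for $N$ large the base-$\ell$ expansion of $c\cdot k^{N}+(\text{lower order})$ is ``generic'' with respect to the digit patterns cut out by $Y$'s automaton; together with the bound $\pi_Y(x)=O(x^{1-\epsilon})$ this would force a contradiction. This step is the main obstacle, and is the reason the conjecture is flagged as being beyond current technology. The $S$-unit method used in Theorem \ref{thm:main2} succeeds precisely because a sparse $\ell$-automatic $Y$ also supplies an $S$-unit-style expression, so that intersection reduces to counting solutions of a single $S$-unit equation; once $Y$ is merely of zero density this feature is lost, and the required input is a quantitative statement about the distribution of residues $k^N \bmod \ell^M$ against arbitrary automaton-definable constraints---a circle of questions closely related to Furstenberg's $\times k$,~$\times \ell$ philosophy. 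A natural intermediate target, which already captures the essential difficulty, would be the case where $Y$ is the set of integers whose base-$\ell$ expansion avoids a fixed forbidden block; even here finiteness of $X\cap Y$ for $X$ sparse $k$-automatic appears to demand either a genuine extension of the Schmidt subspace theorem beyond its current $S$-integer scope or substantially new ergodic-theoretic input, and does not seem derivable from the $S$-unit bounds of \S\ref{sunit} alone.
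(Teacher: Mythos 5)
This statement is a \emph{conjecture}, not a theorem: the paper explicitly does not prove it and, in the sentence immediately before the statement, flags it as ``well beyond what current methods in number theory can handle,'' noting in particular that the Erd\H{o}s conjecture on powers of two with ternary expansions omitting the digit $2$ is a special case. You have correctly recognized this; your submission is not a proof, and it should not be graded as one, but it is worth saying why your analysis is nonetheless sound and where it lines up with, and diverges from, what the paper actually offers.

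The paper's own support for the conjecture is a purely heuristic density computation: using the dichotomy that a zero-density $\ell$-automatic $Y$ satisfies $\pi_Y(x)=O(x^{1-\epsilon})$, and that a sparse $k$-automatic $X$ has $i$-th element $i_N\ge\exp((N/c)^{1/d})$, the authors argue that the ``expected'' size of $X\cap Y$ is controlled by the convergent series $\sum_j C/i_j^\epsilon$. This is a plausibility argument, not a proof, and the authors say so. Your sketch instead proposes a structural route: decompose $X$ via Proposition~\ref{rem:sparse}, pass to a single branch and a single asymptotic regime, and then try to show that numbers of the form $c\cdot k^N(1+o(1))$ are ``generic'' against the digit constraints cutting out $Y$ in base $\ell$. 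Your first two stages are unobjectionable reductions. Your third stage is where you correctly put your finger on the genuine obstruction: once $Y$ is merely zero-density rather than sparse, there is no finite list of $\ell$-power templates, so the $S$-unit machinery of \S\ref{sunit} (which is the entire engine behind Theorem~\ref{thm:main2}) has nothing to bite on, and what is needed instead is an equidistribution statement about $k^N \bmod \ell^M$ against automaton-definable constraints, a Furstenberg $\times k,\times\ell$--type input that is not available. That diagnosis is accurate and goes somewhat beyond the paper's heuristic, which does not attempt to isolate the missing ingredient so precisely. In short: there is no proof to compare against, you did not claim to have one, and the gap you identify is the real one.
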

We note that if we take $k=2,\ell=3$ and $X=\{2^i\colon i\ge 0\}$ and $Y$ to be the set of numbers whose ternary expansions have no occurrences of $2$, then $Y$ has zero density and $X$ is sparse and so the conjecture of 
Erd\H{o}s \cite[p. 67]{Erdos} mentioned in the introduction is a special case of Conjecture \ref{conj:gen}.


\end{document}